\newcommand{\blind}{1}
\def\halpha{\widehat{\alpha}}
\def\heta{\widehat{\eta}}
\def\hB{\widehat{B}}
\def\logit{\operatorname{logit}}
\def\reals{\mathbf{R}}
\def\sigmoid{\operatorname{sigmoid}}
\def\logit{\text{logit}}
\newcounter{rowcount}
\begin{document}

\def\spacingset#1{\renewcommand{\baselinestretch}%
{#1}\small\normalsize} \spacingset{1}


\if1\blind
{
  \title{\bf A linear adjustment based approach to posterior drift in transfer learning}
  \author{Subha Maity\textsuperscript{1}, 
    Diptavo Dutta\textsuperscript{2}, Jonathan Terhorst\textsuperscript{1},\\
    Yuekai Sun\textsuperscript{1} and Moulinath Banerjee\textsuperscript{1}.\\
    \vspace{0.2cm}\\
    \textsuperscript{1}Department of Statistics,
    University of Michigan,\\
    \textsuperscript{2}Department of Biostatistics and Department of Biomedical Engineering,\\
    Johns Hopkins University.
    }
    
    \date{}
  \maketitle
} \fi

\if0\blind
{
  \bigskip
  \bigskip
  \bigskip
  \begin{center}
    {\LARGE\bf A linear adjustment based approach to posterior drift in transfer learning}
\end{center}
  \medskip
} \fi

\bigskip
\begin{abstract}
We present a new model and methods for the posterior drift problem where the regression function 
in the target domain is modeled as a linear adjustment (on an appropriate scale) of that in the source domain, an idea 
that inherits the simplicity and the usefulness of generalized linear models and accelerated failure time models from the classical statistics literature, and study the theoretical properties of our proposed estimator in the binary classification problem. Our approach is shown to be flexible and applicable in a variety of statistical settings, and can be adopted to transfer learning problems in various domains including epidemiology, genetics and biomedicine. As a concrete application, we illustrate the power of our approach through mortality prediction for British Asians by borrowing strength from similar data from the larger pool of British Caucasians, using the UK Biobank data. 
\end{abstract}

\noindent%
{\it Keywords: domain adaptation, binary classification, minimax rate, UK Biobank data}  
\vfill

\newpage
\spacingset{1.45} 
\section{Introduction}
Transfer learning has emerged as a popular and active area of research in the field of statistical learning \citep{torrey2010transfer,pan2009survey,weiss2016survey,zhuang2020comprehensive}. In a broad sense transfer learning arises when we have an abundance of data from a `source domain' which is not of primary interest, and a relatively smaller amount of data from a `target domain', of primary interest, and would like to inform our analysis of the target domain by borrowing strength from the data-rich source domain. 
The key aspect of transfer learning lies in the fact that the source and target data come from different distributions, and therefore, classifiers/estimators trained exclusively on the source may exhibit poor performance in the target domain. There is a wide range of works on how one may use source knowledge in conjunction with limited training data from the target to improve test performance in the latter domain, which, typically, impose some structural similarity between the source and target domains and design algorithms that exploit these similarities to adapt classifiers trained on the source to the target domain \citep{lipton2018Detecting,combes2020Domain}. Complementary to this methodological line of work, there is a more theoretical line of work that studies the fundamental limits of knowledge transfer under certain structural similarities between the source and target domain \citep{cai2019Transfer,kpotufe2018Marginal,maity2020Minimax,reeve2021adaptive}. 

In this paper we specifically consider the problem known as \emph{posterior drift} \citep{cai2019Transfer,liu2020computationally} under a prediction task. The setup assumes that the Bayes classifiers for the source and target populations are different. We primarily restrict ourselves to the binary non-parametric classification setup. So, it is assumed that the posterior probability or the conditional probability of success given covariates are different for the source and the target distributions.  Let $\cX$ be the feature space. Both source ($\bbP$) and target ($\bbQ$) distributions are probability distribution on the space $\cX \times \{0, 1\}$. The conditional probabilities, also known as regression functions, for source and target are denoted as $\eta_P$ and $\eta_Q$ respectively, \ie 
\[ \eta_P(x) = \bbP(y = 1|x), \quad \eta_Q(x) = \bbQ(y = 1|x),  \]
and $\eta_P \neq \eta_Q$.
This type of setup is of particular interest when the experimental setup changes inducing differences in the outcome generating  mechanism. \cite{cai2019Transfer},  in their pioneering work, studied the fundamental limits of knowledge transfer in posterior drift subject to the additional assumption that the decision boundaries in the source and target domains are identical. In this paper, we consider a more general setting in which the decision boundaries are permitted to differ, but the difference is restricted parametrically. In particular, we assume the source and target regression functions are related via
\begin{equation}
\gamma(\eta_Q(x)) = \gamma(\eta_P(x)) + \beta^\top T(x),~~~  \beta\in \reals^{d_T}, T(x) \in \reals^{d_T},
\label{eq:posterior-drift-model}
\end{equation}
where $\gamma$ is a link function and $T(x)$ is a low-dimensional transformation of the covariates. The term $\beta^\top T(x)$ captures the difference between the regression function in the source and target domains. Intuitively, the unknown transfer parameter $\beta$ is much easier to learn than the (potentially complex) regression function $\eta_P$, so transfer is possible even with few samples from the target domain. We note that a special case of this model is label shift \citep{iyer2014Maximum,azizzadenesheli2019Regularized, lipton2018Detecting}: $\gamma$ is the logit link function, $T(x) = 1$, and $\beta = \log\frac{\bbQ(Y=0)}{\bbQ(Y=1)} - \log\frac{\bbP(Y=0)}{\bbP(Y=1)}$.

This linear shift model is motivated from the celebrated Cox proportional hazard model or Cox regression model \cite{cox1972regression}. In this paper, the time dependent integrated hazard is modeled as an exponential of linear shift (of covariates) over the time dependent baseline:
$\Lambda(t|x) = \Lambda_0(t) \exp(\beta^\top x)$. 
Recalling the definition of integrated hazard $\Lambda(t) = - \log (1 - F(t))$ we see that the Cox-PH model has the following equivalent expression.
\[
\log \circ (-\log) \big(1 - F(t|x)\big) = \log \circ (-\log) \big(1 - F_0(t)\big) + \beta^\top x
\] Here,  the transformed conditional survival function $1 - F(\cdot | x)$ is obtained from the transformed baseline survival function $1-F_0$ and a linear shift over the covariates, and the link function $\gamma$ is $\log \circ (-\log)$.

In our paper we mainly focus on the logistic link function, which is of particular interest for classification tasks, as it is easy to work with and is amenable to interpretation. While our results can be generalized without much difficulty to other (known) links, we note that mis-specification of the link function is not a major concern under the prediction task. But as far as inference tasks for the linear shift parameters is concerned, it is important that the link function be prescribed correctly, as otherwise the estimates of the linear shift parameters could be biased. This naturally leads to the question of estimating the link function in such situations. The concluding section provides some discussion for the case where $\gamma$ is unknown.

We study the convergence rate of the target excess risk \eqref{eq:target-excess-risk} in the linear shift model with a logistic link function and  present matching upper and lower bounds subject to standard smoothness assumptions on the regression functions and a margin condition on the noise level (\cf\  Theorems \ref{thm:upper-bound} and  \ref{th:lower-bound}). We also demonstrate the practical relevance of the linear shift model with a mortality prediction task with the UK Bio-bank dataset \citep{biobank2014uk}. Here, the transfer learning task is adapting a classifier trained on individuals with European ancestry to individuals with Asian and Asian British ancestry. This task is especially suitable for transfer learning because most of the individuals in the UK Bio-bank dataset have European ancestry, whilst only a small fraction have Asian and Asian British ancestry. Thus, it is necessary to borrow strength from the individuals with European ancestry to improve the prediction performance on Asian and Asian British individuals. Recently several machine learning approaches including random forests, neural networks and others have been developed to improve the accuracy of mortality prediction models. Here, we  used two approaches: (a) logistic regression and (b) random forest, to predict the 10-year overall (all cause) mortality in individuals of Asian ancestry. For both the classifiers, we demonstrate the improved accuracy of our approach over standard baseline models. The rest of this paper is organized as follows:

In Section \ref{sec:posterior-drift} we 
investigated the fundamental difficulty of prediction in the posterior drift problem and developed a linear adjustment method that achieves this limit. In Section \ref{sec:simulation} we present a performative study of our model in a synthetic data. In Section \ref{sec:ukbb} we show the benefit of linear shift model for predicting mortality in UK Biobank data. Finally, in Section \ref{sec:dsicussion} we discuss several  possible avenues for extending the linearly shifted posterior drift model.

\section{Posterior drift for logistic model}

\label{sec:posterior-drift}

We start by introducing some notations. We  denote the entropy loss function as $\ell: \{0, 1\} \times \reals \to \reals, \ \ell(y, a) = - ya + \log\big(1+e^a\big)$, $\gamma$   denotes the link function for logistic regression, \ie, $\gamma(x) = \log(x/(1-x))$ and $\sigma$  denotes the corresponding inverse link function, \ie, $\sigma(x) = \gamma^{-1}(x) = 1/(1+e^{-x})$. Throughout our study $m$  denotes the source sample size and $n$ the  target sample size and we assume $m > n$. We denote the   Lebesgue measure on a Euclidean space by $\lambda$ and a ball centered at $x$ with radius $r$ by $B(x, r) \triangleq \{y: \|y-x\|_2\le r\}$. 
For a vector $x = (x_1, \dots, x_d)\in \reals^d$ and a scalar $c\in \reals$ the quantity $x + c$ denotes the operation $(x_1 + c,\dots, x_d + c)$. 

In this section we discuss the posterior drift adaptation to the target distribution under the logistic regression model. The adaptation method is agnostic to modeling assumptions for the source regression function $\eta_P.$ Recall the posterior drift model \eqref{eq:posterior-drift-model}. \[
\gamma (\eta_Q(x)) = \gamma (\eta_P(x)) + \beta^\top T(x)\,.
\] We call the parameters $\beta$ as  transfer parameters. 

For the moment, assume that there exists a suitable estimate of  $\eta_P$ denoted as $\hat\eta_P(x).$ We provide details about  possible model choices for $\eta_P$ later. We plug in this estimate into the non-parametric component of  $\gamma(\eta_Q)$ 
and estimate the transfer parameters by fitting logistic regression to the target data: 
\begin{equation}
  \hat \beta \in  \argmin_{\beta}  \frac1n \sum_{i=1}^n \ell\Big(y_{i}^{(Q)}, \gamma(\hat\eta_P(x_i^{(Q)})) +  \beta^\top T\big(x^{(Q)}_{i}\big) \Big)\,.
  \label{eq:offset-estimate}
\end{equation}
Note that $\hat \beta$ is just a maximum likelihood estimator.

Having obtained the parametric estimates from  \eqref{eq:offset-estimate}, we
estimate the target regression function as
\begin{equation}
    \label{eq:target-reg-fn-estimate}
    \hat \eta_Q(x) = \sigma \Big( \gamma(\hat\eta_P(x))+   \hat \beta ^\top T(x)\Big), 
\end{equation} and 
predict the target label of an observation with covariate $x$ as 
\begin{equation}
\hat f (x) = \bbI \Big\{ \hat \eta_Q(x) > \frac 12 \Big\}.
    \label{eq:target-classifier}
\end{equation}

\subsection{Convergence of the excess risk}
We start by studying the upper bound of the excess risk for the classifier \eqref{eq:target-classifier}. We recall that the excess risk of a classifier $f$ is the difference in the mis-classfication error compared to the Bayes classifier, which has the minimum mis-classification error. 
\begin{equation}
\cE_{\bbQ}(f) = \bbQ(f(x) \neq y) - \bbQ(f_*(x) \neq y), ~~~ f_* \text{ is the Bayes classifier for }\bbQ.
\label{eq:target-excess-risk}
\end{equation}

To perform a non-asymptotic study on the  behavior of the target excess risk we require some structural assumptions on the source and target distributions.  The first assumption ensures that $\hat \eta_P$ has sub-Gaussian concentration around $\eta_P$. 

\begin{assumption}[Concentration of the estimated source regression function]
There exists an estimator $\hat \eta_P$ of the function $\eta_P$ such that for some $c_1, c_2, >0$ for almost surely $[P_X]$ any $x$ we have 
\[
\bbP \Big(\big|\hat \eta_P(x) - \eta_P(x)\big| > t\Big) \le c_1 \exp \left( - c_2 \left(\frac{t}{r_m}\right)^2 \right), \ t > 0.
\]
\label{assumption:source-reg-fn-concentration}
\end{assumption} 

Note that the rate $r_m$ is dependent on the source sample size $m$ and decreases to 0 as $m \to \infty.$  This  concentration rate  appears in the convergence rate for the excess risk of the target classifier and dictates the contribution from source data. A higher source sample size results in a  smaller $r_m$ which  leads to a better target classifier. 

We next consider a technical assumption on the target marginal distribution of $x$. If we have a point $x$ which is not in the support (see Definition \ref{def:support}) of $\bbP_X$ but belongs to the support of $\bbQ_X$, \ie,  $\bbP_X(B(x, r)) = 0$ and $\bbQ\big(B(x, r)\big) > 0$ for some $r>0$, then $\eta_P(x)$ is not estimable from the source data, and the plug-in approach fails at the point $x$. Thus, we assume: 
\begin{assumption}
\label{assump:dominated-measure} The support of $\bbQ_X$ is included in the support of $\bbP_X$. 
\end{assumption}

The next assumption is standard in the literature on maximum likelihood estimation. We require the expected loss function to be strongly convex. 
\begin{assumption}[Strong convexity]
Define $m(\beta) = \bbE_{\bbQ} \Big[ \ell \big(y, \gamma(\eta_P(x)\big) + \beta^\top T(x))\Big].$ Assume that  $m(\beta)$ is strongly convex with  a constant  $\kappa > 0$, \ie, for any $\beta_1, \beta_2$ we have  \[
m(\beta_2)\ge m(\beta_1) + \nabla_\beta m(\beta_1)^\top (\beta_2 - \beta_1) +  \kappa \|\beta_1 -\beta_2\|_2^2\,.
\] 
\label{assump:strong-convexity}
\end{assumption} 

The strong convexity for expected log-likelihood enables us to establish consistency of maximum likelihood estimate. If $\beta_0$ is the true parameter value then $\nabla_\beta m(\beta_0) = 0$. Under strong convexity, $m(\beta) \ge m(\beta_0) + \kappa \|\beta - \beta_0\|_2^2$ and  $m(\beta)$ is uniquely at minimized at $\beta_0$.

In order to establish a finite sample bound on the excess risk, we need some control on the tail-behavior of $T(x)$: 

\begin{assumption}[Sub-Gaussian tail assumption]
For $(x, y)\sim \bbQ$, the random variable $T(x)$ has  sub-Gaussian tail, \ie, there exists an $M>0$ such that for any $t > 0$
\[
\sup_{\|a\|_2 = 1}\bbQ \bigg(a ^\top T(x)  > t \bigg) \le \exp\left({- \frac{t^2}{M}}\right)\,.
\]
\label{ass:sub-gaussian}
\end{assumption} 

The above assumptions are enough to establish a concentration bound for $\hat\beta$ and $\hat \eta_Q(x)$. However, in order to translate these bounds to one on the excess risk, we need a final ingredient in our list of assumptions, namely a margin condition on $\eta_Q$. 
\begin{assumption}[Margin condition]
\label{assump:margin-condition}
There exist constants $\alpha, C_\alpha>0$ such that for any $t>0$ we have the following probability bound. 
\[
\bbQ_X\left(0<\Big|\eta_Q(x) - \frac12 \Big|<t\right) \le C_\alpha t^\alpha\,. 
\]
\end{assumption}

The margin condition appears frequently in the literature on  non-parametric classification \citep{audibert2007fast,tsybakov2009Introduction,cai2019Transfer,kpotufe2018Marginal,maity2020Minimax} and restricts the probability  of the region in which $\eta_Q(x)$ is close to $1/2$. The key implication of the margin condition is that we can control  the probability that $\hat \eta_Q(x)$ and $\eta_Q(x)$ lie on opposite sides of $1/2$, thereby  reducing the chance of disagreement between $\hat f$ and Bayes classifier. Indeed, a higher value of $\alpha$ in margin condition leads to a faster rate for excess risk, as seen in Theorem \ref{thm:upper-bound}, and several other papers \citep{audibert2007fast, cai2019Transfer, kpotufe2018Marginal, maity2020Minimax}.

Under the above assumptions we establish a non-asymptotic rate of convergence for the excess risk. We define the class $\cP \equiv \cP\big(T, \{r_m\}_{m\ge 1}, \alpha \big)$ as the set of probability pairs $(\bbP, \bbQ)$ satisfying the condition \ref{eq:posterior-drift-model} and assumptions \ref{assumption:source-reg-fn-concentration}, \ref{assump:dominated-measure}, \ref{assump:strong-convexity}, \ref{ass:sub-gaussian} and \ref{assump:margin-condition}. Note that this class depends on the constants that appear in the above assumptions but we will not explicitly mention them subsequently, for notational convenience. In our upper bound study the rate is presented only in terms of the quantities $m, n$ and $\alpha$ but the constant that appears in the upper bound depends on the constants in the assumptions. 

\begin{theorem}[Upper bound]
\label{thm:upper-bound}

Let $H = \bbP^{\otimes m} \otimes \bbQ^{\otimes n}$. There is a constant $C>0$ which is independent of $m, n$ and $\alpha$ such that the following holds. 
\[
\sup_{(\bbP, \bbQ) \in \cP} \bE_{H}\big[\cE_\bbQ(\hat f)\big] \le C\left(r_m\sqrt{\log( n)} + \sqrt{\frac {d_T}n}\right)^{1+\alpha}\,. 
\]

\end{theorem}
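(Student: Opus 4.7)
The plan is to run the classical plug-in route for non-parametric classification, translating a pointwise concentration bound on $\hat\eta_Q - \eta_Q$ into an excess risk bound via the margin condition. By the standard inequality
\[
\cE_\bbQ(\hat f) \le 2\, \bbE_{\bbQ_X}\!\big[|\eta_Q(X) - \tfrac12|\, \bbI\{\hat f(X)\ne f_*(X)\}\big],
\]
together with the inclusion $\{\hat f \ne f_*\} \subseteq \{|\eta_Q - 1/2| \le |\hat\eta_Q - \eta_Q|\}$ valid for any plug-in rule, it suffices to establish a sub-Gaussian type pointwise bound $\bbP(|\hat\eta_Q(x) - \eta_Q(x)| > t) \le C\exp(-c(t/\delta_n)^2)$ for each $x$ in the support of $\bbQ_X$, with $\delta_n = r_m\sqrt{\log n} + \sqrt{d_T/n}$. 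The margin condition in Assumption \ref{assump:margin-condition} then converts such a pointwise bound into $\bbE_H[\cE_\bbQ(\hat f)] = O(\delta_n^{1+\alpha})$ by the standard Audibert--Tsybakov integration argument.

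Since $\sigma$ is $1/4$-Lipschitz, the pointwise error decomposes as
\[
|\hat\eta_Q(x) - \eta_Q(x)| \le \tfrac14\!\left( |\gamma(\hat\eta_P(x)) - \gamma(\eta_P(x))| + \|\hat\beta - \beta_0\|_2 \, \|T(x)\|_2\right).
\]
For the first summand, Assumption \ref{assumption:source-reg-fn-concentration} combined with the Lipschitz behaviour of $\gamma$ on a compact sub-interval of $(0,1)$ (restricting to an event where $\eta_P$ stays bounded away from $\{0,1\}$) gives an $O(r_m)$ rate at a single $x$; a union bound over the $n$ target covariates which will appear in the estimating equation inflates this to $O(r_m\sqrt{\log n})$. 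The second summand is controlled through the sub-Gaussian tail of $\|T(x)\|_2$ (Assumption \ref{ass:sub-gaussian}) together with a concentration bound on $\hat\beta$, which is the principal technical step.

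For the concentration of $\hat\beta$ I would use the first-order condition $\nabla M_n(\hat\beta)=0$ for the empirical objective in \eqref{eq:offset-estimate}. The linear shift model gives $\sigma(\gamma(\eta_P(x)) + \beta_0^\top T(x)) = \eta_Q(x)$, so the population score satisfies $\nabla m(\beta_0)=0$; a Taylor expansion then yields $\hat\beta - \beta_0 = -H_n^{-1}\,\nabla M_n(\beta_0)$ for a sample Hessian $H_n$ that concentrates around $\nabla^2 m(\beta_0) \succeq 2\kappa I$ (Assumption \ref{assump:strong-convexity}). Next, split the score as the sum of a pure-noise term $\frac1n\sum_i T(x_i^{(Q)})(\eta_Q(x_i^{(Q)}) - y_i^{(Q)})$, whose norm is of order $\sqrt{d_T/n}$ by a vector Bernstein/Hoeffding bound exploiting Assumption \ref{ass:sub-gaussian}, and a plug-in contamination term $\frac1n\sum_i T(x_i^{(Q)})\big[\sigma(\gamma(\hat\eta_P(x_i^{(Q)})) + \beta_0^\top T(x_i^{(Q)})) - \eta_Q(x_i^{(Q)})\big]$, which is of order $r_m\sqrt{\log n}$ by Lipschitz of $\sigma$ and the uniform bound from the previous paragraph. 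Combining gives $\|\hat\beta - \beta_0\|_2$ of order $\delta_n$, and this closes the pointwise bound.

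The main obstacle is the unboundedness of the logit link: $\gamma(\hat\eta_P(x))$ is close to $\gamma(\eta_P(x))$ only when both values lie in a region where $\gamma'$ is bounded, i.e.\ where $\hat\eta_P(x)$ and $\eta_P(x)$ lie in a compact sub-interval of $(0,1)$. The argument therefore needs to be carried out on the intersection of (i) an event on which $\hat\eta_P$ stays away from $\{0,1\}$ uniformly on the relevant sample points (justified by Assumption \ref{assumption:source-reg-fn-concentration}), and (ii) a sub-Gaussian event for $\|T(x_i)\|_2$; the residual on the complementary event can be absorbed because the $(1+\alpha)$-power of a crude trivial bound on the excess risk is still controllable. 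Secondary care is needed in the bookkeeping so that the $\sqrt{\log n}$ factor attaches only to the $r_m$-term, which is why the plug-in contamination and the pure-noise piece of the score must be separated before concentration is applied, rather than bounded together.
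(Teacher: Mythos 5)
Your overall architecture mirrors the paper's (score/first-order condition plus strong convexity to get $\|\hat\beta-\beta_0\|_2 \lesssim \sqrt{d_T/n}+r_m\sqrt{\log n}$, then a pointwise bound on $\hat\eta_Q-\eta_Q$, then the margin-condition integration), but there is a genuine gap at the step where you handle the unbounded logit link. You bound $|\hat\eta_Q(x)-\eta_Q(x)|$ by $\tfrac14|\gamma(\hat\eta_P(x))-\gamma(\eta_P(x))|+\tfrac14\|\hat\beta-\beta_0\|_2\|T(x)\|_2$ and then invoke Lipschitzness of $\gamma$ ``on an event where $\eta_P$ stays bounded away from $\{0,1\}$, justified by Assumption \ref{assumption:source-reg-fn-concentration}.'' Assumption \ref{assumption:source-reg-fn-concentration} only controls the difference $\hat\eta_P-\eta_P$; nothing in the class $\cP$ prevents $\eta_P(x)$ from lying within $r_m$ of $0$ or $1$ on a set of non-negligible $\bbQ_X$-mass, and on that set $|\gamma(\hat\eta_P)-\gamma(\eta_P)|$ can be arbitrarily large even though $|\hat\eta_P-\eta_P|\le c\,r_m$. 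This is not a small-probability event that can be absorbed by a trivial bound, so both your pointwise decomposition and your bound on the plug-in contamination term in the score (which passes through the same Lipschitz-of-$\gamma$ step) fail under the stated assumptions. The paper avoids the issue by never linearizing $\gamma$ alone: it differentiates the composite map $a\mapsto\sigma(\gamma(a)+b)=ae^{b}/(ae^{b}+1-a)$, whose partial derivative in $a$ is bounded by $e^{|b|}$ (Lemmas \ref{lemma:inequality-A} and \ref{lemma:tech2}, used again in Lemma \ref{lemma:concentration-B}), so regions where $\eta_P$ is near $0$ or $1$ cause no blow-up; the price is an $e^{c\|x\|_2}$ factor, which is then controlled through the sub-Gaussian tail of $T(x)$.

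A second, related imprecision: you claim a pointwise bound $\bbP(|\hat\eta_Q(x)-\eta_Q(x)|>t)\le C\exp(-c(t/\delta_n)^2)$ uniformly in $x$ and then invoke the ``standard'' Audibert--Tsybakov integration. Because $T(x)$ is unbounded (only sub-Gaussian), the term $\|\hat\beta-\beta_0\|_2\|T(x)\|_2$ makes any honest pointwise bound carry an $x$-dependent factor (the paper's Lemma \ref{lemma:concentration-target} has $e^{c_2\|x\|_2}$), and the risk integration must be modified accordingly — the paper's Lemma \ref{lemma:ub-excess-risk} truncates at $\|x\|_2\le\chi$ with $\chi$ chosen per dyadic shell and uses the tail of $\|x\|_2$ to control the remainder. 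Your sketch of where the $\sqrt{\log n}$ arises (union bound over the $n$ target covariates in the score, cf.\ Lemma \ref{lemma:concentration-B2}) and your Hessian/score split for $\hat\beta$ are consistent with the paper's strategy, but as written the proof does not go through without replacing the Lipschitz-of-$\gamma$ step by the composite-map derivative bound and the standard integration by its truncated version.
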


A detailed proof of the upper bound is provided in Appendix \ref{proof:ub} but here is the basic idea: We first establish that $\hat \beta$ exhibits sub-Gaussian concentration around  $\beta$. Next, we combine the concentration of $\hat \beta$ and $\hat \eta_P$ (Assumption \ref{assumption:source-reg-fn-concentration}) to establish a concentration bound for $\hat \eta_Q$. Finally  the concentration on $\hat \eta_Q$ in conjunction with the margin condition (Assumption \ref{assump:margin-condition}) delivers the non-asymptotic upper bound on excess risk. 

The upper bound quantifies the non-asymptotic behavior of the expected excess risk in terms of the quantities $r_m, n$ and $\alpha$. 

\begin{remark}
The extra $\sqrt{\log(n)}$ term, which does not manifest in the lower bound as we will see later, appears to be unavoidable for providing a concentration bound on the quantity $$E_{\bbQ_X}\big[(\hat \eta_P (x) - \eta_P(x))^2\big] \equiv \int\,(\hat \eta_P (x) - \eta_P(x))^2\,d\bbQ_X(x) \,.$$ 
The event $\big\{|\hat \eta_P(x) - \eta_P(x)|> t\big\}$ that appears in the display of Assumption \ref{assumption:source-reg-fn-concentration} typically depends on $x$. In order to tackle the above expectation, we need to take a union bound of such events over the $x$ realizations in target data, which produces the $\sqrt{\log n}$ term via a supremum argument. 
However, it does not appear under linear modelling assumptions on  $\eta_P(x)$, e.g. in terms of main effects and interactions as used for our simulations and data-analysis. If  $\gamma\big(\eta_P(x)\big) = \theta^\top S(x)$ for some $S(x), \theta \in \reals^{d_S}$, then we note that: 
\[
\begin{aligned}
E_{\bbQ_X}\big[\big(\hat \eta_P (x) - \eta_P(x))^2\big] & = E_{\bbQ_X}\big[\big(\sigma(\hat \theta^\top S(x)) - \sigma( \theta^\top S(x))\big)^2\big]\\
& \le \frac14 E_{\bbQ_X}\big[\big( (\hat \theta - \theta )^\top S(x)\big)^2\big] \le \frac{\|\hat \theta - \theta\|_2^2E_{\bbQ_X}\big[\|S(x)\|_2^2\big]}{4} \,.
\end{aligned}
\]
If  $\hat\theta$ concentrates  around $\theta$ at a rate $r_m$  (in the usual case $r_m = \sqrt{d_S/m}$) and $E_{\bbQ_X}\big[\|S(x)\|_2^2\big] < \infty$ then  $E_{\bbQ_X}\big[\big(\hat \eta_P (x) - \eta_P(x))^2\big]$ also concentrates around zero at a rate $r_m^2$ and the $\sqrt{\log(n)}$ term does not appear. 
\end{remark}

Another interesting question regarding the convergence of excess risk is whether our upper bound is rate optimal. We address this question, partly, under non-parametric modelling of the source regression function $\eta_P.$ There is a vast literature on non-parametric estimation of the source regression function $\eta_P$,  typically restricted to the case 
where $\cX$ is a compact subset of $\reals^{d_F}$. Without loss of generality we will assume $\cX\subset [0, 1]^d$. We will assume the following two conditions on $\bbP_X$: 
The first condition assumes regular support and strong density for $\bbP_X$ \cite[Definition 2.2]{audibert2007fast} \footnote{A lower bound result could also have been proved under a {mild density assumption} \cite[Definition 2.1]{audibert2007fast} at the expense of a slower convergence rate.}.
\begin{assumption}[Regular support and strong density]
\label{assump:strong-density}
The followings are true for $\bbP_X$. 
\begin{enumerate}
    \item \emph{Regular support}: The set $S_P \triangleq \text{ support of }\bbP_X$ is compact and  there exist $c_0, r_0>0$ such that  $S_P$ is $(c_0, r_0)$-regular, \ie, \ \[
    \lambda \big[B(x, r)\cap S_P\big] \ge c_0\lambda\big[B(x, r)\big], \text{ for any } 0 < r \le r_0. 
    \]
    \item \emph{Strong density}: There exist $0< \mu_{\text{min}} < \mu_{\text{max}}<\infty  $ such that the density $p_X$ of $\bbP_X$ satisfies \[
    \mu_{\text{min}} \le p_X(x)\le  \mu_{\text{max}},\  x \in S_P\,. 
    \]
\end{enumerate}
\end{assumption}
 
 The next condition pertains to the smoothness of $\eta_P$. For a function $g : \reals^{d_F} \to \reals$ and a non-negative integer $b$ we denote $g_x^{(b)}(\cdot)$ as the $b$-th order Taylor polynomial of the function $g$ (see Definition \ref{def:taylor-poly}). With this notation the smoothness condition can be stated thus: 
\begin{assumption}[Smooth regression function]
\label{assump:smoothness}
There exists some positive constants $\beta, C_\beta>0$ such that $\eta_P$ is $\lfloor \beta \rfloor$ times continuously differentiable and for any $x, x'\in \cX$ \[
\left|\eta_P(x') - \big(\eta_P\big)_x^{(\lfloor\beta\rfloor )}(x')\right|\le C_\beta \|x-x'\|_2^\beta \,. 
\]
\end{assumption}

In non-parametric classification, the Assumptions \ref{assump:strong-density} and \ref{assump:smoothness} imply tat there exists an $\hat \eta_P$ which satisfies the Assumption \ref{assumption:source-reg-fn-concentration} with $r_m = m^{-\frac{\beta}{2\beta + d_F}}$ (\cite[Theorem 3.2]{audibert2007fast}). For our lower bound calculations we will  consider a new  class of probability pairs  by replacing Assumption \ref{assumption:source-reg-fn-concentration} with the Assumptions \ref{assump:strong-density} and \ref{assump:smoothness} while holding all other assumptions fixed and denote this class as $\cP'$. Note that $\cP' \subset \cP\big(T, \big\{m^{-\frac{\beta}{2\beta + d_F}}\big\}_{m\ge 1}, \alpha\big)$. 
For the following minimax result, we consider the following hypothesis class: 
\[
\cF_{m, n} = \left\{f: \big(\cX\times \{0, 1\}\big)^{m+n} \to \cC\right\}\,,
\]
where $\cC$ denotes all measurable functions from $g:\cX\to \{0, 1\}$. 
 
 \begin{theorem}[Lower bound] 
 \label{thm:lower-bound}
Let $H = \bbP^{\otimes m} \otimes \bbQ^{\otimes n}$. There exists a constant $c>0$ independent of $m, n, \alpha$  and $ \beta$ such that 
\[ \inf_f\sup_{(\bbP, \bbQ)\in \cP'} \bE_{D\sim H} \big[\cE_{\bbQ}\big(f(D)\big)\big] \ge c \left( m^{-\frac\beta{2\beta+d_F}} + \sqrt{\frac {d_T} {n}} \right)^{1 + \alpha}\,.  \]
\label{th:lower-bound}
\end{theorem}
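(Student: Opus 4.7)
The lower bound is a sum of two rates: a nonparametric rate $m^{-\beta/(2\beta+d_F)}$ arising from estimating $\eta_P$ on the source side, and a parametric rate $\sqrt{d_T/n}$ from estimating the transfer parameter on the target side. My plan is to bound each of the two terms separately by restricting to a sub-family of $\cP'$ in which only one of these sources of uncertainty is active, and then to combine the two bounds via the elementary inequality $\max(a,b)^{1+\alpha} \ge 2^{-(1+\alpha)}(a+b)^{1+\alpha}$. In both steps the passage from a parameter-space separation to an excess-risk lower bound picks up the $(1+\alpha)$ exponent via the margin condition in Assumption \ref{assump:margin-condition}, in the standard fashion of Audibert--Tsybakov.

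For the source-driven term, I would set the transfer parameter to zero (so that $\eta_Q \equiv \eta_P$) and reduce to classical nonparametric classification with $m$ samples. I would then employ the standard Audibert--Tsybakov hypercube construction: partition a compact region of $\cX$ into $N \asymp h^{-d_F}$ disjoint balls of radius $h$, perturb a smooth baseline by $\pm h^{\beta}$ times a bump on each ball indexed by $\omega \in \{0,1\}^N$, take $h \asymp m^{-1/(2\beta+d_F)}$, and apply Assouad's lemma. This yields the rate $m^{-\beta(1+\alpha)/(2\beta+d_F)}$. The smoothness Assumption \ref{assump:smoothness} and the strong-density Assumption \ref{assump:strong-density} are satisfied by construction, the margin condition follows from the geometry of the bumps, and Assumption \ref{assump:strong-convexity} reduces here to positive definiteness of $\bbE_{\bbQ}[T(x) T(x)^\top]$, which is guaranteed by a fixed non-degenerate choice of the covariate distribution.

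For the target-driven term, I would instead fix a simple known source regression function (for instance $\eta_P \equiv 1/2$), so that its estimation from source data is trivial and the only unknown is the $d_T$-dimensional transfer parameter. I would consider the hypercube family of target distributions $\{\bbQ_\omega : \omega \in \{-1,+1\}^{d_T}\}$ determined by transfer parameters $\delta\omega$, with $\delta \asymp n^{-1/2}$ chosen so that the KL divergence between adjacent hypotheses scales as $n\delta^2 = O(1)$. Assouad's lemma then gives a parameter-estimation lower bound of order $\sqrt{d_T/n}$, and the margin condition converts this into an excess-risk lower bound of order $(d_T/n)^{(1+\alpha)/2}$.

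The main obstacle is to arrange the two constructions so that every assumption defining $\cP'$ is satisfied simultaneously, most notably the margin condition on the perturbed target regression functions in the source-side step, and the compatibility of the hypercube directions with the strong-convexity Assumption \ref{assump:strong-convexity} in the target-side step, which in turn controls the Hellinger separation between neighbouring hypotheses used in Assouad's lemma. These are standard hurdles in adapting the Audibert--Tsybakov machinery and amount to careful bookkeeping around the constants in the divergence bounds and in Assouad's inequality.
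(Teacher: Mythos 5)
Your overall decomposition (a nonparametric term obtained by taking the transfer parameter to be zero, a parametric term obtained by making $\eta_P$ trivial, then combining the two via a max-versus-sum inequality) is exactly the paper's route, and your source-side step is fine: the paper simply invokes Audibert--Tsybakov (their Theorem 3.5) with $\bbP=\bbQ$ rather than re-running the hypercube/Assouad construction. The gap is in the target-side step. The logic ``Assouad gives a parameter-estimation lower bound of order $\sqrt{d_T/n}$, and the margin condition converts this into an excess-risk lower bound'' does not go through: the margin condition is an \emph{upper} bound on the mass near $\eta_Q=1/2$, so it only yields upper bounds on excess risk in terms of regression-function or parameter error, never lower bounds; moreover the minimax infimum is over arbitrary classifiers, not plug-in rules built from an estimate of $\beta$, so a lower bound for estimating $\beta$ does not transfer to classification. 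The information-theoretic argument has to be run directly on the excess risk, with a marginal $\bbQ_X$ \emph{designed to saturate} the margin condition, i.e.\ placing a small, controlled mass exactly where $|\eta_Q-1/2|$ has the relevant order and where the competing Bayes classifiers disagree.

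The scaling in your hypercube is also off. If each coordinate of the transfer parameter is perturbed by $\delta\asymp n^{-1/2}$ and the active region (where $\eta_Q$ deviates from $1/2$ by about $\delta$) carries per-coordinate mass $w$, then the margin condition forces $d_T w \le C_\alpha\delta^\alpha$, and the excess-risk separation between any two hypotheses is at most (mass)$\times$(gap) of order $d_T w\,\delta \le C_\alpha\delta^{1+\alpha}\asymp n^{-(1+\alpha)/2}$: the factor $d_T^{(1+\alpha)/2}$ is lost. Note also that the per-sample KL between adjacent hypotheses is of order $w\delta^2$, not $\delta^2$, so the KL budget in this margin-constrained regime is governed by the small active mass, not by taking $\delta\asymp n^{-1/2}$. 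The paper instead makes the deviation of $\gamma(\eta_Q)$ on the active set of order $r=\sqrt{d_T/n}$ by spreading a single two-point perturbation over roughly $d_T$ coordinates ($\beta_{\pm 1}$ with nonzero entries $\pm r$), puts mass $c_\alpha r^\alpha$ on the active points and the remaining mass at one central point (which verifies the margin condition, their Lemma \ref{lemma:margin-lb-par2}), shows the Bayes classifiers differ on the active set so that $\cE_{\bbQ_1}(f_{-1})\ge c' r^{1+\alpha}$, and bounds $\mathrm{KL}(H_1\Vert H_{-1})$ by a constant because the per-sample KL is of order $r^{2+\alpha}$; a two-point LeCam argument (Tsybakov, Theorem 2.2) then yields $(d_T/n)^{(1+\alpha)/2}$ directly, with no Assouad step and no parameter-estimation detour. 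Your target-side construction would need to be reworked along these lines (regression gap of order $\sqrt{d_T/n}$ on an active set of mass of order $(d_T/n)^{\alpha/2}$) before the parametric term comes out right.
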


We provide a proof of the above theorem in Section \ref{sec:proof-lb}.

\begin{remark}
Firstly, we notice that the lower bound in Theorem \ref{thm:lower-bound} matches with the upper bound in Theorem \ref{thm:upper-bound} with $r_m = m^{-\frac{\beta}{2\beta + d_F}}$ upto a logarithm term ($\log (n)$) and a constant term which is independent of the parameters $m, n, \alpha$ and $ \beta$. We furthermore notice that the probability class $\cP'$ used to derive the lower bound is a subclass of the probability class $\cP\big(T, \big\{m^{-\frac{\beta}{2\beta + d_F}}\big\}_{m\ge 1}, \alpha\big)$ in the upper bound. With these two observations we conclude that the classifier $\hat f$ is minimax rate optimal (upto a $\log(n)$ term) in  the class $\cP\big(T, \big\{m^{-\frac{\beta}{2\beta + d_F}}\big\}_{m\ge 1}, \alpha\big)$. 
\end{remark}

\begin{remark}
Note that the rate does not adapt to the difference between the source and target distributions, meaning that under the sub-case that $\beta = 0$ the rate does not reduce to $(m+n)^{-\frac{\beta(1+\alpha)}{2\beta + d_F}} \asymp m^{-\frac{\beta(1+\alpha)}{2\beta + d_F}}$, the non-parametric rate of convergence for \iid\ data \citep{audibert2007fast}. In this regard, we claim that it is impossible to obtain an adaptive rate in this situation. To provide insight, we consider the simple case where $\eta_P \equiv 1/2$ and $T(x) = 1$, whence the linear shift model reduces to Bernoulli distributions on source and target with  probabilities  $\frac12$ and $p = \frac{e^\beta}{ 1+ e^\beta}$ for class one, respectively. To make a prediction on the target labels it is necessary to estimate $p$, which can be done at a rate no faster than $\frac1{\sqrt{n}}$, regardless of whether $\beta = 0 \Leftrightarrow p = \frac12$. So, for a fixed $n$, the excess risk does not converge to zero as $m \to \infty$ regardless of whether $p = \frac12$. 

\end{remark}

  \subsection{Alternative estimate for linear model of source regression function}
  
  In this section we present an alternative estimator based on  linear modeling of the source regression function, which is the strategy we employ later for the analysis of the UK Biobank data. 
Let $S(x) \in \reals^{d_S}$ be some transformation of the covariate and assume the following model for the source regression function. 
  \[
  \gamma(\eta_P(x)) = \xi^\top S(x) + \beta_P^\top T(x),\quad \beta_P\in \reals^{d_S}.
  \] Here, $\xi$ is the coefficient for source corresponding to the covariate $S(x).$ The posterior drift condition \eqref{eq:posterior-drift-model} implies the following form for the target regression function:  
  \[
  \gamma(\eta_Q(x)) = \xi^\top S(x) + \beta_Q^\top T(x),\quad \beta_Q \equiv \beta_P + \beta.
  \]
  The parameters are now estimated as follows: 
  \begin{equation}
      (\hat \xi, \hat \beta_P, \hat \beta) = \argmin\ \frac1{m+n} \sum_{i = 1}^{m+n}\ell \Big( y_i, \xi^\top S(x_i) + \big(\beta_P^\top + \bbI_Q(i)\beta  \big) T(x_i) \Big),
  \end{equation}
  where, $\bbI_Q(i)$ is the indicator variable for identifying whether a sample point is from target population. The above optimization produces the $(m+n)^{-1/2}$-rate for $\hat \xi$ compared to an $m^{-1/2}$-rate if only the $P$-data was used. But the performance gain is marginal if $m \gg n. $

\section{Simulations}
\label{sec:simulation}

In this section, we investigate the statistical properties of the linear shift model. In both the source and the target domains, the covariates $x \in \reals^d$ are generated from $N_d(0, 4\bI_d).$ Given covariates $x = (x_1, \dots, x_d)^\top$, the response $y$ is generated from different Bernoulli distributions for source and target domains with the following regression functions:
\[
\begin{aligned}
P(y = 1| x) &=  \sigma\Bigg(\sum_{j=1}^d (-1)^{j-1}(\xi x_j^2 - \delta x_j) \Bigg),\\
Q(y = 1| x) &=  \sigma\Bigg(\sum_{j=1}^d (-1)^{j-1}(\xi x_j^2 + \delta x_j) \Bigg),
\end{aligned}
\] where $\sigma$ is the inverse of an appropriate link function. Note that the parameter $\xi$ controls the strength of non-linearity: $\xi = 0$ leads to a simple linear model. The parameter $\delta$ determines the similarity between the source and target domains: a larger $\delta$ leads to a greater discrepancy between source and target regression functions.

We first fit a logistic regression model on the source with all main and two way interaction effects (including $x_j^2$'s) as the covariates:
\begin{equation}
\begin{aligned}
\heta_P(x) \triangleq \sigmoid(\halpha_P^Tx + \frac12x^T\hB_Px), \\
(\halpha_P,\hB_P) \in \argmin_{\alpha,B}\frac1n\sum_{i=1}^n\ell(y_i^{(P)},\alpha^Tx_i^{(P)} + \frac12(x_i^{(P)})^TB x_i^{(P)}).
\end{aligned}
\label{eq:source.full}
\end{equation}
For $x\in \reals^d$ the total number of covariates in this model is $2d + \frac{d(d-1)}{2}.$ To adapt this model to the target domain, we refit the main effects on samples from the target domain:
\[
\begin{aligned}
\heta_Q(x) \triangleq \sigmoid(\halpha_Q^Tx + \frac12x^T\hB_Px), \\
\halpha_Q \in \argmin_\alpha\frac1n\sum_{i=1}^n\ell(y_i^{(Q)},\alpha^Tx_i^{(Q)} + \frac12(x_i^{(Q)})^T\hB_Px_i^{(Q)}).
\end{aligned}
\]
We call this classifier \emph{transfer}, and we compared it predictive accuracy to several other models:
\begin{itemize}
\item \emph{source.main} which only fits main effects on source data: 
\[
\begin{aligned}
\heta_P(x) \triangleq \sigmoid(\halpha_P^Tx) \\
\halpha_P \in \argmin_\alpha\frac1n\sum_{i=1}^n\ell(y_i^{(P)},\alpha^Tx_i^{(P)}),
\end{aligned}
\]
\item \emph{source.full} which fits main plus interaction effects on the source data \eqref{eq:source.full},
\item \emph{target.main} which only fits main effects on target data:
\[
\begin{aligned}
\heta_Q(x) \triangleq \sigmoid(\halpha_Q^Tx) \\
\halpha_Q \in \argmin_\alpha\frac1n\sum_{i=1}^n\ell(y_i^{(Q)},\alpha^Tx_i^{(Q)}),
\end{aligned}
\]
\item \emph{target.full} which fits main plus interaction effects but estimates them only from target data:
\begin{equation}
\begin{aligned}
\heta_Q(x) \triangleq \sigmoid(\halpha_P^Tx + \frac12x^T\hB_Px) \\
(\halpha_P,\hB_P) \in \argmin_{\alpha,B}\frac1n\sum_{i=1}^n\ell(y_i^{(Q)},\alpha^Tx_i^{(Q)} + \frac12(x_i^{(Q)})^TB x_i^{(P)}),
\end{aligned}
\label{eq:target.full}
\end{equation}
\item \emph{ideal} which is the population version of \emph{target.full}.
\end{itemize}

In figure \ref{fig:sim-logit}, we show the benefits of linearly shifted transfer learning under a well-specified model. In the \emph{upper left} plot, we see that as the source sample size ($m$) increases, the $Q$-accuracy for {transfer} increases. However, there is a gap between the performance of {transfer} and ideal. This is unsurprising  because the target sample size remains fixed ($n = 100$). In the \emph{upper middle} plot, we see that as $n$ increases, the accuracy of both target.full and transfer models approach that of the ideal model, but the accuracy of transfer is closer to that of ideal. In the \emph{upper right} plot, we see that source.full is slightly more accurate than transfer when $\delta$ is small. In this case, there is very little difference between source and target distribution, so there is little need for transfer learning. On the other hand, the accuracy of both source.full and source.main suffers when $\delta$ is large. Finally, in the \emph{lower middle} plot, we see that as the dimension increases, the transfer model is the most accurate (except the ideal model); it is the most statistically efficient because it has the fewest parameters to estimate. 

\begin{figure}
    \centering
    \includegraphics[scale = 0.4]{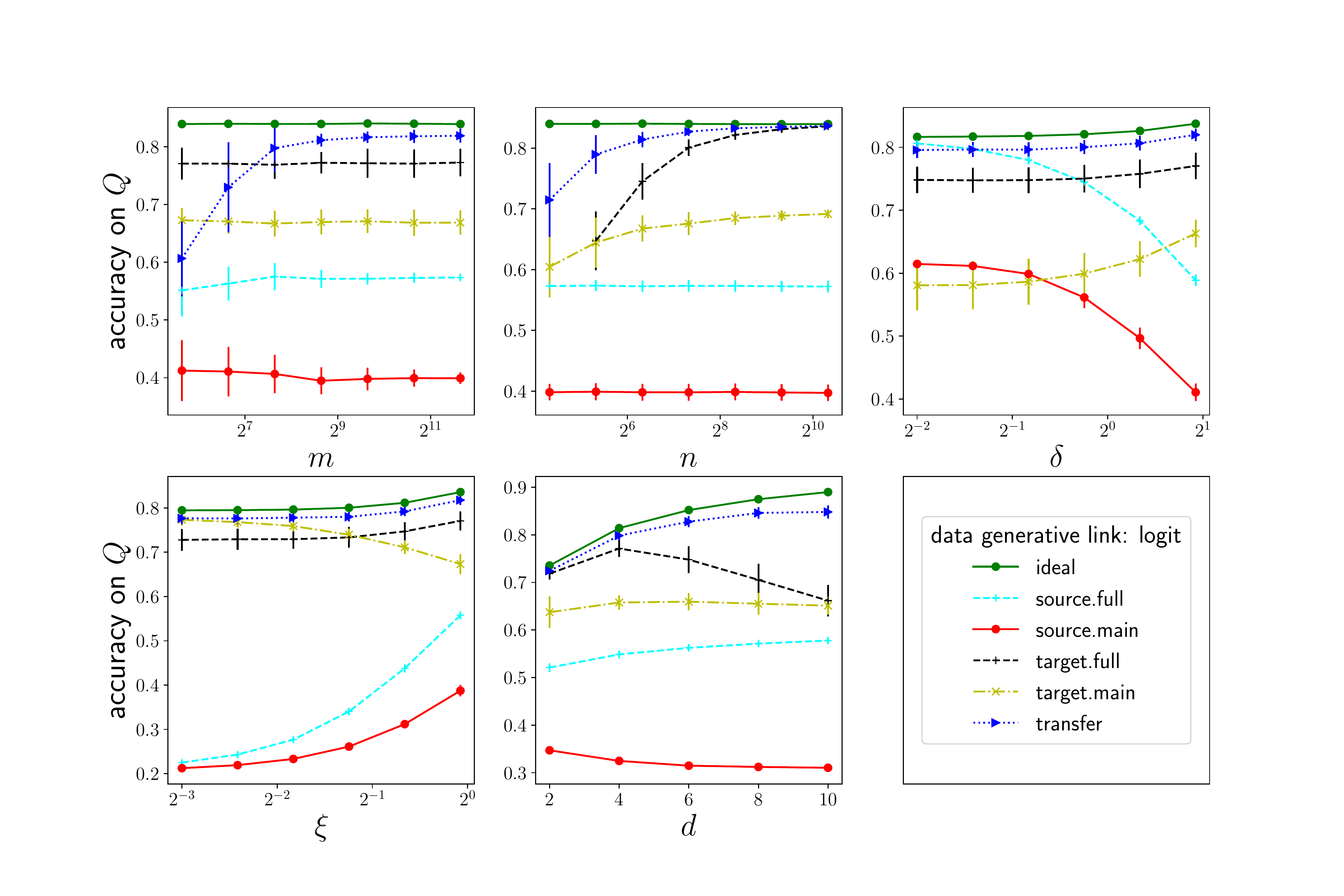}
    \caption{Predictive accuracy in the target domain for different source sample sizes ($m$, \emph{upper left}), target sample sizes ($n$, \emph{upper middle}), similarity between $P$ and $Q$ ($\delta$, \emph{upper right}), non-linearity ($\xi$, \emph{lower left}), and covariate dimension ($d$, \emph{lower middle}). Here the \emph{logit} link is used to generate the labels in both source and target domains. The default values for the parameters are set at $m = 2000, n = 100, \delta = 2, \xi = 1$ and $d = 5$. }
    \label{fig:sim-logit}
\end{figure}

We also considered the effects of link function mis-specification. Here we generate data using the \emph{probit} (inverse CDF of standard normal), but always fit models with the logistic link function. We note that link function mis-specification does not qualitatively change the trends that we observe in Figure \ref{fig:sim-logit}. We conclude that a mis-specified link may give an inconsistent estimate for the coefficients for the data generative model, but it does not detract from the predictive accuracy of the model. In Appendix \ref{sec:appx-simulation}, we present analogous results for the  \emph{cauchit} (inverse CDF of standard Cauchy) and \emph{cloglog} link functions ($\log\{-\log[1-x]\}$).

\begin{figure}
    \centering
    \includegraphics[scale = 0.4]{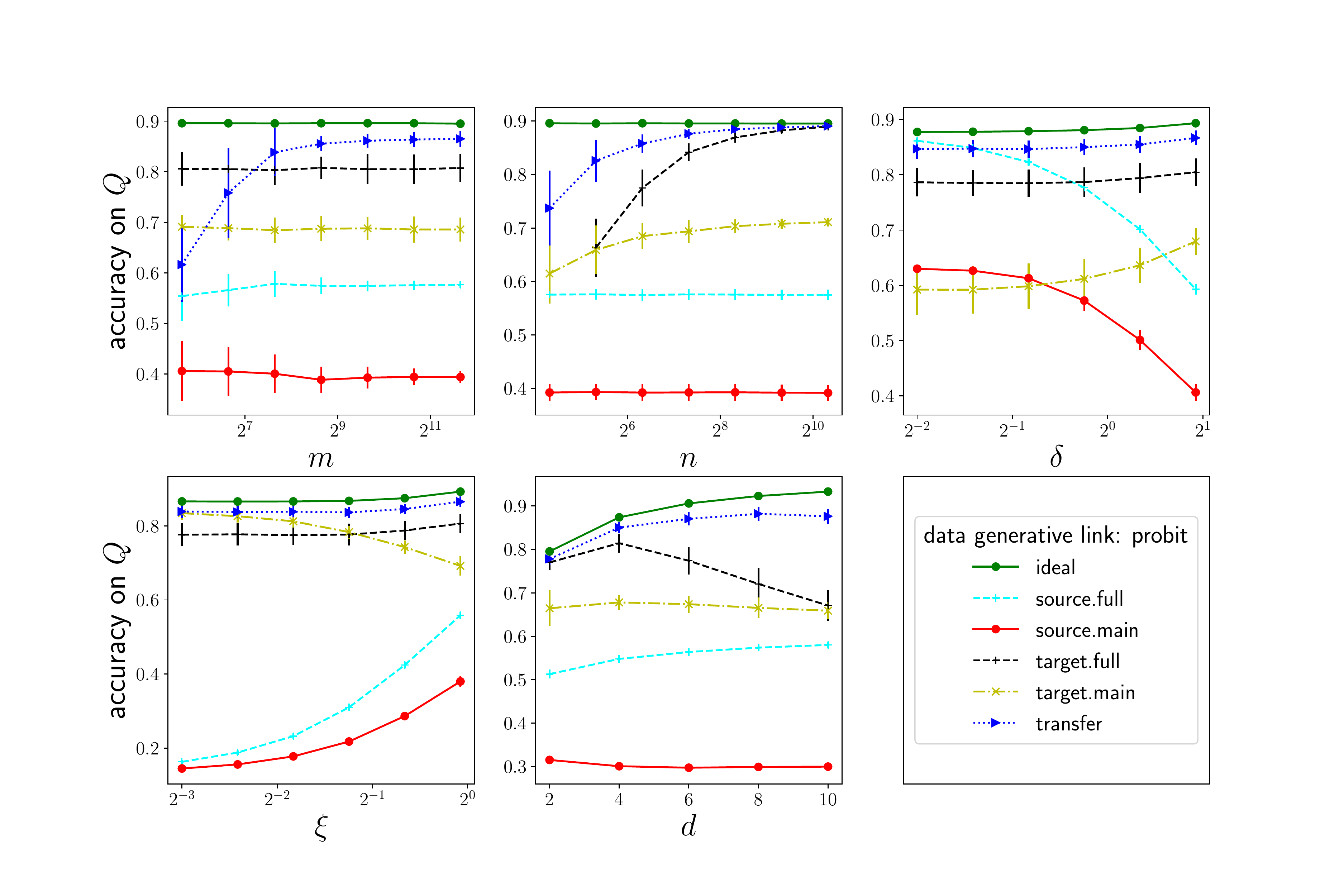}
    \caption{Predictive accuracy in the target $Q$ for different source sample sizes ($m$, \emph{upper left}), target sample sizes ($n$, \emph{upper middle}), similarity between $P$ and $Q$ ($\delta$, \emph{upper right}), non-linearity ($\xi$, \emph{lower left}), and covariate dimension ($d$, \emph{lower middle}). Here the data was generated using \emph{probit} link, but we fit models with the logistic link function. The default values for the parameters are $m = 2000, n = 100, \delta = 2, \xi = 1$ and $d = 5$. }
    \label{fig:sim-probit}
\end{figure}
\section{Predicting 10 year mortality in UK Biobank data}

\label{sec:ukbb} In this section, we study our method's ability to
accurately predict a complex phenotype on data from  the UK Biobank \citep[UKBB;][]{biobank2014uk} repository.
The UKBB contains information from over 500,000 patients aged 40-70
in the United Kingdom, including detailed physical, behavioral, biological
and questionnaire-based outcomes on the participants. Initial enrollment occurred between 2006 and 2010, with ongoing followups.

The use of large-scale datasets to train predictive models of disease
and patient outcomes is currently of great interest in the biomedical
community \citep{Collins2015-np}. Already, such models can accurately
predict risk for diseases such as breast cancer, type 2 diabetes,
and coronary artery disease, in some cases exceeding the performance
of existing diagnostic methods \citep{khera2018genome,tam2019benefits}.
However, the accuracy of these predictions varies according to patient
ancestry, with a bias towards increased accuracy for individuals of
European descent. For example, a recent study found that the accuracy
of a predictive model for schizophrenia declined by more than 50\%
in non-European populations \citep{Vilhjalmsson2015-ap}, and another
observed that certain methods tend to indicate systematically higher
risk across many different diseases in African compared to non-African
populations \citep{Kim2018-oq}. These disparities occur because data
collection efforts have historically focused on genetically European
populations \citep{martin2017human}. Transfer learning has recently
been suggested as a means of reducing ethnicity-based outcome disparities
in clinical applications of machine learning \citep{gao2020deep,li2021targeting,li2021transfer}. 

Here, we aim to predict 10-year all-cause mortality for UKBB participants, \ie,
whether the patient died within 10 years of their first visit to a clinic. A
previous study of this dataset \citep{ganna20155} found that the strongest
overall predictor of mortality was self-reported health
(including age), and that amongst healthy individuals, measures of smoking
habits were the strongest predictors. \cite{ganna20155} also found 
gender-based differences in predictive accuracy: better overall predictive
performance was observed in men, mainly because men die from causes that
are more predictable (e.g., lung cancer), and because of a stronger age effect in
men. The best overall predictor in men was self-reported health, while in women,
it was the existence of a previous cancer diagnosis.

We hypothesize that measurable differences
in predictive accuracy are likely to emerge amongst other subgroups as well. 
The demographic composition of the UKBB broadly mirrors that of the United
Kingdom, and hence consists mainly of individuals of European heritage. 
Indeed, after filtering the dataset to consist of 258,465 unrelated individuals, 
we found that the vast majority ($N=233,810$; 90.46\%) self-identify as ``White''.
We consider this to be the ``source'' population in our model. 
For the ``target'' population we considered all individuals who self-identified
as ``Asian'' ($N=10,699$; 4.14\%). 
We note that the outcome variable
is highly imbalanced: of the $244,509$ (``White'' and ``Asian'') participants included in
our analysis, \textbf{7186 }(\textbf{2.939\%}) died during the study
window. Overall mortality was slightly but significantly lower in
the ``Asian'' population (\textbf{2.337\% }Asian vs. \textbf{2.966\%}
White; $p<10^{-\mathbf{4}}$; $t$-test). In our study we include 34 demographic and health related variables as  predictors of mortality (see Table \ref{tab:ukbb-covariates} in Appendix \ref{sec:appx-ukbb} for further descriptions). 

\subsection{Predicting 10 year mortality with logistic regression}

In Table \ref{tab:ukbb} we present the predictive performance of our linear-adjustment model and two baseline models. We fit a weighted logistic regression to optimize for balanced accuracy because the classes are imbalanced (less than 3\% of the individuals are from the positive class). The weights are set at the inverse of class proportions (of the appropriate subpopulation).  We evaluate the models in terms of their true positive rate (TPR), true negative rate (TNR), AUC and the balanced accuracy ((TPR+TNR)/2). We refer to the Table \ref{tab:ukbb-full} in Appendix \ref{sec:appx-ukbb} for the performance summaries of several other models. 

The two baseline models  that we consider in the main text are logistic regression model with main plus two-way interaction effects fitted on source and target data. Following the terminology in Section \ref{sec:simulation}, we call them  \emph{source.full} and \emph{target.full} respectively.  The linear adjustment model, which we call \emph{transfer}, is described below. 
\begin{equation}
 \begin{aligned}
\hat \eta (x) \triangleq \sigmoid \left(\sum_{j}\hat \alpha_j^{(Q)} x_j + \sum_{j < j'}\hat \beta_{j, j'}^{(P)} x_jx_{j'}\right),\\
\big(\{\hat \alpha_j^{(P)}\}_j, \{\hat \beta_{j,j'}^{(P)}\}_{j<j'}\big) = \argmin \frac1m \sum_{i = 1}^m w_i^{(P)}\ell \left(y^{(P)}_i, \sum_{j} \alpha_j x_j^{(P)} + \sum_{j < j'} \beta_{j, j'} x_j^{(P)}x_{j'}^{(P)}\right)\\
+ \lambda \left(\sum_{j} |\alpha_j| + \sum_{j < j'} |\beta_{j, j'}|\right),\\
\big(\{\hat \alpha_j^{(Q)}\}_j\big) = \argmin \frac1n \sum_{i = 1}^n w_i^{(Q)}\ell \left(y^{(Q)}_i, \sum_{j}  \alpha_j x_j^{(P)} + \sum_{j < j'} \hat \beta_{j, j'}^{(P)} x_j^{(P)}x_{j'}^{(P)} \right)\\
+ \lambda \left(\sum_{j} |\alpha_j| \right)
\,.
\end{aligned}    
\label{eq:offset-interaction}
\end{equation}
The models have 34  main effects and ${34 \choose 2} = 561$  two-way interaction effects. Though the design matrix doesn't quite fall under the regime of high dimensional data ($p \ll n$),  the regularization in our model fitting is required to circumvent  collinearity among several predictors (\eg, weight, body fat percentage and body mass index). 

This approach is motivated by the intuition that the differences between the main effects (in the source and target populations) are larger than the differences between the two-way interaction effects \citep{interactionSame2017,mortalityRace2016,RaceDiffPRS2017Martin}. Thus we elect to refit the main effects to data from the target domain (while keeping the estimates of the interaction effects from the source domain).

Our results are shown in Table \ref{tab:ukbb}. Comparison 
of the \emph{source.full} and \emph{target.full} rows confirms our original hypothesis that mortality prediction is overall more accurate in the White subpopulation. 
However, we see that the AUC and balanced accuracy of the linear adjustment model on Asians is substantially higher than that of the other baseline models; 
in particular, it is substantially higher than naive (\emph{target.full}) model because it has borrowed information from the source population to compensate for the (comparatively) small sample of Asians in the dataset. In fact, in terms of balanced accuracy,  the adjusted model is almost as accurate 
as the ``gold standard'' \emph{source.full} model on Whites. This demonstrates the utility of our approach in improving model accuracy on underrepresented groups.

\begin{table}[]
    \centering
   \begin{tabular}{lllrrrrr}
\toprule
& {models} &   source & target &     TPR &    TNR &    AUC &  bal-acc \\
\midrule
1 & source.full &        White &     White &      0.745 &  0.746 &  0.814 &    0.746 \\
2 & source.full &        White &     Asian &       0.680 &  0.775 &  0.769 &    0.728 \\
3 & target.full &        Asian &     Asian &       0.640 &  0.762 &  0.751 &    0.701 \\
4 & linear.adjustment &  White/Asian &     Asian &       0.740 &  0.750 &  0.773 &    0.745 \\
\bottomrule
\end{tabular}
\caption{Accuracies of 10 year mortality prediction in White and Asian subpopulations under different models. Here ``bal-acc'' is the balanced accuracy,  defined as (TPR+TNR)/2. }
    \label{tab:ukbb}
\end{table}

\subsection{Predicting 10 year mortality with random forest}

In this section we present the details and findings about the 10 year mortality prediction study using random forest, and the utility of our linearly adjusted posterior drift model. The details about the subpopulations and prediction accuracies are same as in the previous subsection. As before we have two baselines \emph{rf.White} and \emph{rf.Asian}, which are random forests fitted on the data from White and Asian subpopulations respectively. Each of these random forests are constructed with 5 million classification trees. To handle the issue of class imbalanced data we take the balanced sampling approach: we sample equal number of individuals from each class to construct each tree, where the samples are drawn with replacement and the sample sizes are equal to the minimum number of observations across classes. This is a standard way to handle imbalanced datasets for random forests \cite{chen2004Using}.
To test the utility of our linearly adjusted approach we consider the \emph{rf.transfer} model. Denoting $\hat p(x)$ as the predicted probability of being in class 1  for an observation with predictor value $x$ in the source   the description of rf.tarnsfer follows.

\begin{equation}
    \begin{aligned}
        \hat \eta (x) = \sigmoid \Big(\logit \big(\hat p(x)\big) + \hat \beta ^\top x\Big),\\
\hat \beta = \argmin \frac 1n \sum_{i = 1}^n \ell \left(y^{(Q)}_i, \logit \big(\hat p(x_i^{(Q)})\big)+  \beta^\top x^{(Q)} \right) + \lambda  \|\beta\|_1\,.
    \end{aligned}
\end{equation}

\begin{table}[]
    \centering
\begin{tabular}{rlllrrrrr}
  \hline
 & models & train-data & test-data &  TPR & TNR & AUC & bal-acc \\ 
  \hline
1 & rf.White & White & White &  0.627 & 0.818 & 0.801 & 0.722 \\ 
  2 & rf.White & White & Asian &  0.500 & 0.842 & 0.814 & 0.671 \\ 
  3 & rf.Asian & Asian & Asian &  0.480 & 0.848 & 0.805 & 0.664 \\ 
  4 & rf.transfer & White/Asian & Asian  & 0.820 & 0.701 & 0.814 & 0.761 \\ 
   \hline
\end{tabular}
    \caption{10 year mortality prediction summaries for White and Asian subpopulations using random forest and linearly adjusted transfer.}
    \label{tab:rf}
\end{table}

We present the summary of model performances in Table \ref{tab:rf}. The conclusions are similar as in the previous subsection. In terms of balanced accuracy, the linear adjustment  approach in row 4 of Table \ref{tab:rf} was able to improve over the two baselines (rf.White and rf.Asian).

\section{Future work}
\label{sec:dsicussion}

In this section, we describe a few possible extensions of our methodology which provide interesting directions for future research in this area.   

\paragraph{Extension to generalized linear models:}

The posterior drift model extends naturally beyond binary classification to the GLM's. Let $\gamma$ denote the link function (e.g. for the Poisson GLM, $\gamma$ is the log function) and $\mu_P(x)$ and $\mu_Q(x)$ be the conditional mean of $Y$ given $X=x$ in the source and target domains respectively:  
\[
\bbE_{\bbP} [Y|X=x] = \mu_P(x), \quad \bbE_{\bbQ} [Y|X=x] = \mu_Q(x).
\] 
As before, for a $d_T$ dimensional transformation $T(x)$ of the feature space, we define the posterior drift as
\begin{equation}
    \gamma(\mu_Q(x)) = \gamma(\mu_P(x)) + \beta^\top T(x).
    \label{eq:posterior-drift-glm}
\end{equation}
Following our method, we estimate the source conditional mean  $\mu_P(x)$ from the (labeled) samples from the source domain (denote this estimator as $\hat \mu_P(x)$) and let $\ell(y, a)$ be the negative log-likelihood for the GLM under consideration with response $y$ and  the link transformed conditional mean $a$. The transfer parameter $\beta$ is estimated via from the following optimization:  
\begin{equation}
\label{eq:optimization-transfer-glm}
    \hat \beta \in \argmin_{\beta} \frac1{n} \sum_{i=1}^n \ell\Big(y_{i}^{(Q)}, \gamma\big(\widehat\mu_P(x_i^{(Q)})\big) +  \beta^\top T\big(x^{(Q)}_{i}\big) \Big)\,.
\end{equation}
Finally, we estimate $\hat \mu_Q(x) := \gamma^{-1}(\gamma(\hat \mu_P(x)) + \hat \beta^T T(x))$, leading to the prediction rule: 
\[ \hat \mu_Q(X^{(Q)}) := \gamma^{-1}(\gamma(\hat \mu_P(X^{(Q)})) + \hat \beta^T T(X^{(Q)}))) \,\]
for a generic target feature realization $X^{(Q)}$.

\paragraph{High dimensional linear shift:}

For a high-dimensional $T(x)$, the linearly shifted posterior drift model can readily accommodate numerous techniques from the vast literature on high dimensional data. For example, one can estimate the coefficient for high dimensional linear shift by adding a regularization term $\cR(\beta)$ in the optimization \eqref{eq:optimization-transfer-glm}:  
\[
\hat \beta \in \argmin_{\beta} \frac1{n} \sum_{i=1}^n \ell\Big(y_{i}^{(Q)}, \gamma\big(\widehat\mu_P(x_i^{(Q)})\big) +  ^\top T\big(x^{(Q)}_{i}\big) \Big) + \cR(\beta)\,. 
\] 
An $\ell_1$-penalization (\cite{tibshirani1996regression}) on the above optimization can be used for model selection on the linear shift. Furthermore, under a sparsity assumption on the true $\beta$, one can establish model selection consistency (similar to lasso consistency \cite[Chapter 6]{buhlmann2011statistics}) for $\hat \beta$.  

\paragraph{Non-parametric estimation of the link function}

In this paper we mainly concern ourselves with prediction on the target data which is known to be robust to mis-specification of the link function. In fact, in our simulations we validated this fact: we showed that if the data generative link differs from the link used for model fitting, the behavior of the prediction error does not differ substantially compared to the true link. But for a mis-specified link, the optimizations \eqref{eq:offset-estimate} and \eqref{eq:optimization-transfer-glm} will typically produce a biased estimate for the linear shift coefficient $\beta$. This could be problematic when one is interested in characterizing the difference between source and target populations and drawings inference on $\beta$. In such cases, the natural approach is to consider non-parametric estimation of the link function where techniques from estimation in single-index models \citep{hristache2001direct,naik2001single,wang2010estimation,kong2007variable, alquier2013sparse} can be adopted.

 \section*{Acknowledgement}
 This paper is based upon work supported by the National Science Foundation (NSF) under grants no.\ 1916271, 2027737,  2113373 (Moulinath Banerjee, Subha Maity and Yuekai Sun) and 2052653 (Jonathan Terhorst). 
 Diptavo Dutta was supported by NIH R01-HG010480-01. The authors would also like to thank Dr. Nilanjan Chatterjee, Bloomberg Distinguished Professor of Biostatistics, Johns Hopkins University, for his insightful comments. 
 This research has been conducted using data from UK Biobank, a major biomedical database, under project ID 42622.
 Any opinions, findings, and conclusions or recommendations expressed in this note are those of the authors and do not necessarily reflect the views of the NSF and NIH.

\bibliographystyle{plainnat}
\bibliography{YK,sm}

\begin{thebibliography}{38}
\providecommand{\natexlab}[1]{#1}
\providecommand{\url}[1]{\texttt{#1}}
\expandafter\ifx\csname urlstyle\endcsname\relax
  \providecommand{\doi}[1]{doi: #1}\else
  \providecommand{\doi}{doi: \begingroup \urlstyle{rm}\Url}\fi

\bibitem[Alquier and Biau(2013)]{alquier2013sparse}
Pierre Alquier and G{\'e}rard Biau.
\newblock Sparse single-index model.
\newblock \emph{Journal of Machine Learning Research}, 14\penalty0 (1), 2013.

\bibitem[Audibert and Tsybakov(2007)]{audibert2007fast}
Jean-Yves Audibert and Alexandre~B. Tsybakov.
\newblock Fast learning rates for plug-in classifiers.
\newblock \emph{Annals of Statistics}, 35\penalty0 (2):\penalty0 608--633,
  April 2007.
\newblock ISSN 0090-5364, 2168-8966.
\newblock \doi{10.1214/009053606000001217}.

\bibitem[Azizzadenesheli et~al.(2019)Azizzadenesheli, Liu, Yang, and
  Anandkumar]{azizzadenesheli2019Regularized}
Kamyar Azizzadenesheli, Anqi Liu, Fanny Yang, and Animashree Anandkumar.
\newblock Regularized {{Learning}} for {{Domain Adaptation}} under {{Label
  Shifts}}.
\newblock \emph{arXiv:1903.09734 [cs, stat]}, March 2019.

\bibitem[Biobank(2014)]{biobank2014uk}
UK~Biobank.
\newblock About uk biobank, 2014.

\bibitem[B{\"u}hlmann and Van De~Geer(2011)]{buhlmann2011statistics}
Peter B{\"u}hlmann and Sara Van De~Geer.
\newblock \emph{Statistics for high-dimensional data: methods, theory and
  applications}.
\newblock Springer Science \& Business Media, 2011.

\bibitem[Cai and Wei(2019)]{cai2019Transfer}
T.~Tony Cai and Hongji Wei.
\newblock Transfer {{Learning}} for {{Nonparametric Classification}}: Minimax
  {{Rate}} and {{Adaptive Classifier}}.
\newblock \emph{arXiv:1906.02903 [cs, math, stat]}, June 2019.

\bibitem[Chen et~al.(2004)Chen, Liaw, and Breiman]{chen2004Using}
Chao Chen, Andy Liaw, and Leo Breiman.
\newblock Using {{Random Forest}} to {{Learn Imbalanced Data}}.
\newblock Technical Report 666, {UC Berkeley}, July 2004.

\bibitem[Collins and Varmus(2015)]{Collins2015-np}
Francis~S Collins and Harold Varmus.
\newblock A new initiative on precision medicine.
\newblock \emph{N. Engl. J. Med.}, 372\penalty0 (9):\penalty0 793--795,
  February 2015.

\bibitem[Cox(1972)]{cox1972regression}
D.~R. Cox.
\newblock Regression {{Models}} and {{Life}}-{{Tables}}.
\newblock \emph{Journal of the Royal Statistical Society. Series B
  (Methodological)}, 34\penalty0 (2):\penalty0 187--220, 1972.
\newblock ISSN 0035-9246.

\bibitem[des Combes et~al.(2020)des Combes, Zhao, Wang, and
  Gordon]{combes2020Domain}
Remi~Tachet des Combes, Han Zhao, Yu-Xiang Wang, and Geoff Gordon.
\newblock Domain {{Adaptation}} with {{Conditional Distribution Matching}} and
  {{Generalized Label Shift}}.
\newblock \emph{arXiv:2003.04475 [cs, stat]}, March 2020.

\bibitem[Ganna and Ingelsson(2015)]{ganna20155}
Andrea Ganna and Erik Ingelsson.
\newblock 5 year mortality predictors in 498 103 uk biobank participants: a
  prospective population-based study.
\newblock \emph{The Lancet}, 386\penalty0 (9993):\penalty0 533--540, 2015.

\bibitem[Gao and Cui(2020)]{gao2020deep}
Yan Gao and Yan Cui.
\newblock Deep transfer learning for reducing health care disparities arising
  from biomedical data inequality.
\newblock \emph{Nature communications}, 11\penalty0 (1):\penalty0 1--8, 2020.

\bibitem[Hristache et~al.(2001)Hristache, Juditsky, and
  Spokoiny]{hristache2001direct}
Marian Hristache, Anatoli Juditsky, and Vladimir Spokoiny.
\newblock Direct estimation of the index coefficient in a single-index model.
\newblock \emph{Annals of Statistics}, pages 595--623, 2001.

\bibitem[Iyer et~al.(2014)Iyer, Nath, and Sarawagi]{iyer2014Maximum}
Arun Iyer, Saketha Nath, and Sunita Sarawagi.
\newblock Maximum {{Mean Discrepancy}} for {{Class Ratio Estimation}}:
  Convergence {{Bounds}} and {{Kernel Selection}}.
\newblock In \emph{International {{Conference}} on {{Machine Learning}}}, pages
  530--538, January 2014.

\bibitem[Khera et~al.(2018)Khera, Chaffin, Aragam, Haas, Roselli, Choi,
  Natarajan, Lander, Lubitz, Ellinor, et~al.]{khera2018genome}
Amit~V Khera, Mark Chaffin, Krishna~G Aragam, Mary~E Haas, Carolina Roselli,
  Seung~Hoan Choi, Pradeep Natarajan, Eric~S Lander, Steven~A Lubitz, Patrick~T
  Ellinor, et~al.
\newblock Genome-wide polygenic scores for common diseases identify individuals
  with risk equivalent to monogenic mutations.
\newblock \emph{Nature genetics}, 50\penalty0 (9):\penalty0 1219--1224, 2018.

\bibitem[Kim et~al.(2018)Kim, Patel, Teng, Berens, and Lachance]{Kim2018-oq}
Michelle~S Kim, Kane~P Patel, Andrew~K Teng, Ali~J Berens, and Joseph Lachance.
\newblock Genetic disease risks can be misestimated across global populations.
\newblock \emph{Genome Biol.}, 19\penalty0 (1):\penalty0 1--14, November 2018.

\bibitem[Kong and Xia(2007)]{kong2007variable}
Efang Kong and Yingcun Xia.
\newblock Variable selection for the single-index model.
\newblock \emph{Biometrika}, 94\penalty0 (1):\penalty0 217--229, 2007.

\bibitem[Kpotufe and Martinet(2018)]{kpotufe2018Marginal}
Samory Kpotufe and Guillaume Martinet.
\newblock Marginal {{Singularity}}, and the {{Benefits}} of {{Labels}} in
  {{Covariate}}-{{Shift}}.
\newblock \emph{arXiv:1803.01833 [cs, stat]}, March 2018.

\bibitem[Li et~al.(2021{\natexlab{a}})Li, Cai, and Duan]{li2021targeting}
Sai Li, Tianxi Cai, and Rui Duan.
\newblock Targeting {{Underrepresented Populations}} in {{Precision Medicine}}:
  A {{Federated Transfer Learning Approach}}.
\newblock \emph{arXiv:2108.12112 [cs, stat]}, August 2021{\natexlab{a}}.

\bibitem[Li et~al.(2021{\natexlab{b}})Li, Ren, Sabatti, and
  Sesia]{li2021transfer}
Shuangning Li, Zhimei Ren, Chiara Sabatti, and Matteo Sesia.
\newblock Transfer learning in genome-wide association studies with knockoffs.
\newblock \emph{arXiv preprint arXiv:2108.08813}, 2021{\natexlab{b}}.

\bibitem[Lipton et~al.(2018)Lipton, Wang, and Smola]{lipton2018Detecting}
Zachary~C. Lipton, Yu-Xiang Wang, and Alex Smola.
\newblock Detecting and {{Correcting}} for {{Label Shift}} with {{Black Box
  Predictors}}.
\newblock \emph{arXiv:1802.03916 [cs, stat]}, July 2018.

\bibitem[Liu et~al.(2020)Liu, Li, and Shang]{liu2020computationally}
Ruiqi Liu, Kexuan Li, and Zuofeng Shang.
\newblock A computationally efficient classification algorithm in posterior
  drift model: Phase transition and minimax adaptivity.
\newblock \emph{arXiv preprint arXiv:2011.04147}, 2020.

\bibitem[Maity et~al.(2020)Maity, Sun, and Banerjee]{maity2020Minimax}
Subha Maity, Yuekai Sun, and Moulinath Banerjee.
\newblock Minimax optimal approaches to the label shift problem.
\newblock \emph{arXiv:2003.10443 [math, stat]}, April 2020.

\bibitem[Marden et~al.(2017)Marden, Mayeda, Walter, Vivot, Tchetgen, E.~J., and
  Glymour]{interactionSame2017}
J.~R. Marden, E.~R. Mayeda, S.~Walter, A.~Vivot, Tchetgen Tchetgen, I.~E.~J.,
  Kawachi, and M.~M. Glymour.
\newblock Using an alzheimer’s disease polygenic risk score to predict memory
  decline in black and white americans over 14 years of follow-up running head:
  Ad polygenic risk score predicting memory decline.
\newblock \emph{Alzheimer Dis Assoc Disord}, 30\penalty0 (3):\penalty0
  195--202, 2017.

\bibitem[Martin et~al.(2017{\natexlab{a}})Martin, Gignoux, Walters, Wojcik,
  Neale, Gravel, Daly, Bustamante, and Kenny]{RaceDiffPRS2017Martin}
A.~R. Martin, C.~R. Gignoux, R.~K. Walters, G.~L. Wojcik, B.~M. Neale,
  S.~Gravel, M.~J. Daly, C.~D. Bustamante, and E.~E. Kenny.
\newblock Human demographic history impacts genetic risk prediction across
  diverse populations.
\newblock \emph{The American Journal of Human Genetics}, 100:\penalty0
  635--649, 2017{\natexlab{a}}.

\bibitem[Martin et~al.(2017{\natexlab{b}})Martin, Gignoux, Walters, Wojcik,
  Neale, Gravel, Daly, Bustamante, and Kenny]{martin2017human}
Alicia~R Martin, Christopher~R Gignoux, Raymond~K Walters, Genevieve~L Wojcik,
  Benjamin~M Neale, Simon Gravel, Mark~J Daly, Carlos~D Bustamante, and
  Eimear~E Kenny.
\newblock Human demographic history impacts genetic risk prediction across
  diverse populations.
\newblock \emph{The American Journal of Human Genetics}, 100\penalty0
  (4):\penalty0 635--649, 2017{\natexlab{b}}.

\bibitem[Naik and Tsai(2001)]{naik2001single}
Prasad~A Naik and Chih-Ling Tsai.
\newblock Single-index model selections.
\newblock \emph{Biometrika}, 88\penalty0 (3):\penalty0 821--832, 2001.

\bibitem[Pan and Yang(2009)]{pan2009survey}
Sinno~Jialin Pan and Qiang Yang.
\newblock A survey on transfer learning.
\newblock \emph{IEEE Transactions on knowledge and data engineering},
  22\penalty0 (10):\penalty0 1345--1359, 2009.

\bibitem[Reeve et~al.(2021)Reeve, Cannings, and Samworth]{reeve2021adaptive}
Henry~WJ Reeve, Timothy~I Cannings, and Richard~J Samworth.
\newblock Adaptive transfer learning.
\newblock \emph{arXiv preprint arXiv:2106.04455}, 2021.

\bibitem[Shervin et~al.(2016)Shervin, M, and Sarah]{mortalityRace2016}
Assari Shervin, Lankarani~Maryam M, and Burgard Sarah.
\newblock Black–white difference in long-term predictive power of self-rated
  health on all-cause mortality in united states.
\newblock \emph{Annals of Epidemiology}, 26\penalty0 (2):\penalty0 106--114,
  2016.

\bibitem[Tam et~al.(2019)Tam, Patel, Turcotte, Boss{\'e}, Par{\'e}, and
  Meyre]{tam2019benefits}
Vivian Tam, Nikunj Patel, Michelle Turcotte, Yohan Boss{\'e}, Guillaume
  Par{\'e}, and David Meyre.
\newblock Benefits and limitations of genome-wide association studies.
\newblock \emph{Nature Reviews Genetics}, 20\penalty0 (8):\penalty0 467--484,
  2019.

\bibitem[Tibshirani(1996)]{tibshirani1996regression}
Robert Tibshirani.
\newblock Regression shrinkage and selection via the lasso.
\newblock \emph{Journal of the Royal Statistical Society: Series B
  (Methodological)}, 58\penalty0 (1):\penalty0 267--288, 1996.

\bibitem[Torrey and Shavlik(2010)]{torrey2010transfer}
Lisa Torrey and Jude Shavlik.
\newblock Transfer learning.
\newblock In \emph{Handbook of research on machine learning applications and
  trends: algorithms, methods, and techniques}, pages 242--264. IGI global,
  2010.

\bibitem[Tsybakov(2009)]{tsybakov2009Introduction}
Alexandre~B. Tsybakov.
\newblock \emph{Introduction to {{Nonparametric Estimation}}}.
\newblock Springer Series in Statistics. {Springer}, {New York ; London}, 2009.
\newblock ISBN 978-0-387-79051-0 978-0-387-79052-7.

\bibitem[Vilhj{\'a}lmsson et~al.(2015)Vilhj{\'a}lmsson, Yang, Finucane, Gusev,
  Lindstr{\"o}m, Ripke, Genovese, Loh, Bhatia, Do, Hayeck, Won, {Schizophrenia
  Working Group of the Psychiatric Genomics Consortium, Discovery, Biology, and
  Risk of Inherited Variants in Breast Cancer (DRIVE) study}, Kathiresan, Pato,
  Pato, Tamimi, Stahl, Zaitlen, Pasaniuc, Belbin, Kenny, Schierup, De~Jager,
  Patsopoulos, McCarroll, Daly, Purcell, Chasman, Neale, Goddard, Visscher,
  Kraft, Patterson, and Price]{Vilhjalmsson2015-ap}
Bjarni~J Vilhj{\'a}lmsson, Jian Yang, Hilary~K Finucane, Alexander Gusev, Sara
  Lindstr{\"o}m, Stephan Ripke, Giulio Genovese, Po-Ru Loh, Gaurav Bhatia, Ron
  Do, Tristan Hayeck, Hong-Hee Won, {Schizophrenia Working Group of the
  Psychiatric Genomics Consortium, Discovery, Biology, and Risk of Inherited
  Variants in Breast Cancer (DRIVE) study}, Sekar Kathiresan, Michele Pato,
  Carlos Pato, Rulla Tamimi, Eli Stahl, Noah Zaitlen, Bogdan Pasaniuc, Gillian
  Belbin, Eimear~E Kenny, Mikkel~H Schierup, Philip De~Jager, Nikolaos~A
  Patsopoulos, Steve McCarroll, Mark Daly, Shaun Purcell, Daniel Chasman,
  Benjamin Neale, Michael Goddard, Peter~M Visscher, Peter Kraft, Nick
  Patterson, and Alkes~L Price.
\newblock Modeling linkage disequilibrium increases accuracy of polygenic risk
  scores.
\newblock \emph{Am. J. Hum. Genet.}, 97\penalty0 (4):\penalty0 576--592,
  October 2015.

\bibitem[Wang et~al.(2010)Wang, Xue, Zhu, and Chong]{wang2010estimation}
Jane-Ling Wang, Liugen Xue, Lixing Zhu, and Yun~Sam Chong.
\newblock Estimation for a partial-linear single-index model.
\newblock \emph{The Annals of statistics}, 38\penalty0 (1):\penalty0 246--274,
  2010.

\bibitem[Weiss et~al.(2016)Weiss, Khoshgoftaar, and Wang]{weiss2016survey}
Karl Weiss, Taghi~M Khoshgoftaar, and DingDing Wang.
\newblock A survey of transfer learning.
\newblock \emph{Journal of Big data}, 3\penalty0 (1):\penalty0 1--40, 2016.

\bibitem[Zhuang et~al.(2020)Zhuang, Qi, Duan, Xi, Zhu, Zhu, Xiong, and
  He]{zhuang2020comprehensive}
Fuzhen Zhuang, Zhiyuan Qi, Keyu Duan, Dongbo Xi, Yongchun Zhu, Hengshu Zhu, Hui
  Xiong, and Qing He.
\newblock A comprehensive survey on transfer learning.
\newblock \emph{Proceedings of the IEEE}, 109\penalty0 (1):\penalty0 43--76,
  2020.

\end{thebibliography}

\newpage
\appendix
\section{Technical definitions}
\begin{definition}[Support]
\label{def:support}
For a probability measure $P$ defined on $\reals^d$ the support of $P$ is defined as the collection of such points $x\in \reals^d$ that there exists a  $r_x>0$ we have $P\big(B(x, r)\big)>0$ for any $0<r\le r_x$ . 
\end{definition}

\begin{definition}[Taylor polynomial]
\label{def:taylor-poly}
Let  $\bbN_0$ be the set of all non-negative integers. For  an $x=(x_1, \dots, x_d)\in \reals^d$ and an $s = (s_1, \dots, s_d)\in \bbN_0^d$ we define the followings:  $|s| = \sum_{i=1}^d s_i$, $s! = s_1! \dots s_d!$, $x^s =  x_1^{s_1}\dots x_d^{s_d}$ and $D^s = \frac{\partial^{|s|}}{(\partial x_1)^{s_1}\dots (\partial x_d)^{s_d}}$.  For a function $f:\reals^d\to \reals$ and a non-negative integer $b\in \bbN_0$ we define the $b$-th order Taylor polynomial of $f$ at some $x\in \reals^d$ as 
\[
f_x^{(b)}(y) = \sum_{s\in\bbN_0^d, |s|\le b} \frac{(y-x)^s}{s!} D^s f(x), \ y \in \reals^d
\] The polynomial does not exist if any of the differentiation value in the sum does not exist. 
\end{definition}

\section{Proofs}

\subsection{Upper bound}

\begin{proof}[Proof of Theorem \ref{thm:upper-bound}]
\label{proof:ub}

For the simplicity of notations we proceed with $T(x) = x$, but the whole proof is valid for a different $T(x)$. 
We define the followings.  \[
\begin{aligned}
z = (x, y), ~~
\omega(h, \beta; z) = \nabla_{\beta} \ell \big(y; \gamma(h(x)) + \beta^\top x\big) = x \big[- y + \sigma\big(\gamma(h(x)) + \beta^\top x\big) \big],\\
\hat h(x) = \hat \eta_P(x), ~~
h_0(x) = \eta_P(x),~~
\hat \omega(\beta; z) = \omega(\hat h, \beta; z),~~
\omega_0(\beta; z) = \omega(h_0, \beta; z),\\
\hat \bbQ_n = \frac1n \sum_{i=1}^n\delta_{(x_i^{(Q)}, y_i^{(Q)})},~~
M(h, \beta) = \bbQ\big[ \omega(h, \beta; z) \big],~~
M_n(h, \beta) = \hat \bbQ_n\big[ \omega(h, \beta; z) \big]\,.
\end{aligned}
\]

The first order optimality conditions on the estimation of $\beta$ at population and sample levels are 
\[
M_n(\hat h, \hat \beta) = 0 ~~~~ \text{and} ~~~~ M(h_0, \beta_0) = 0.
\]
Note that 
\[
M_n(\hat h, \hat \beta) - M(\hat h, \hat \beta) + M(\hat h, \hat \beta) - M(h_0, \hat \beta) + M(h_0, \hat \beta) - M(h_0, \beta_0)  = 0.
\] We define the followings 
\[
(A) \triangleq M_n(\hat h, \hat \beta) - M(\hat h, \hat \beta), (B) \triangleq M(\hat h, \hat \beta) - M(h_0, \hat \beta), \text{ and }
(C) \triangleq M(h_0, \hat \beta) - M(h_0, \beta_0), 
\]
and  start by proving almost sure convergence for $\hat \beta$. We have established concentration tail bounds (around zero) for $(A)$ (Lemma \ref{lemma:concentration-A}) and hence almost sure convergence to zero (Lemma \ref{lemma:conc-to-as}). Similarly, we have established concentration (around zero) (Lemmas \ref{lemma:concentration-B} and \ref{lemma:concentration-B2}) and almost convergence (to zero) for $(B)$. To deal with $(C)$
we observe that  $E_\bbQ \Big[\ell \big(y; \gamma(h_0(x)) + \beta^\top x\big)\Big] $ is strongly convex  with respect to $\beta$ (Assumption \ref{assump:strong-convexity}) which implies \[
(\hat \beta - \beta_0)^\top (C) \ge f \|\hat \beta - \beta_0\|_2^2\,.
\]
In the inequality
\begin{equation}
\label{eq:mle-first-order}
   0 = (\hat\beta-\beta_0)^\top M_n(\hat h, \hat \beta) \ge f \|\beta - \beta_0\|_2^2 - \|\hat \beta - \beta_0\|_2 \big(\|(A)\|_2 + \|(B)\|_2\big)\,,
 \end{equation}
 we see that $\|A\|_2 + \|B\|_2 \to 0$ almost surely, and hence $\hat \beta \to \beta_0$ almost surely.

Equipped with $\hat \beta \stackrel{\text{a.s.}}{\to}\beta $ we now establish the concentration bound for $\hat \beta$ around $\beta$. 
Lemmas \ref{lemma:concentration-A} and \ref{lemma:concentration-norm} imply that for some $c_A>0$ the inequality  \[
\|(A)\|_2 \le c_A t \sqrt{\frac{d}{n}}
\]  holds with probability at least $1-\frac12e^{-t^2}$. Also, lemmas \ref{lemma:concentration-B} and \ref{lemma:concentration-B2} imply that for $c_B, c_B'>0$  the inequality 
\[
\|(B)\|_2 \le c_B r_m t\sqrt{ \log(n)} e^{c_B' \|\hat \beta\|_2^2 }
\] holds with probability at least $1-\frac12e^{-t^2}$. Using the union probability bound we conclude that the two upper bounds for $\|(A)\|_2$ and $\|(B)\|_2$ hold simultaneously with probability at least $1-e^{-t^2}$. 
Using these upper bounds in the inequality \eqref{eq:mle-first-order} we conclude that  with probability at least $1-e^{-t^2}$
 we have  \[
 \begin{aligned}
 f\|\hat \beta - \beta_0\|_2^2 & \le \|\hat \beta - \beta_0\|_2\left(c_A t \sqrt{\frac{d}{n}} + c_B t r_m \sqrt{\log(n)} e^{c_B' \|\hat \beta\|_2^2 }\right)\\
 & \le \|\hat \beta - \beta_0\|_2\left(c_A t \sqrt{\frac{d}{n}} + c_B t r_m \sqrt{ \log(n)} e^{c_B' \|\hat \beta - \beta_0\|_2^2 + c_B' \|\beta_0\|_2^2 }\right)
 \end{aligned}
  \] or 
  
  \begin{equation}
  \label{eq:convergence-noneq}
      \|\hat \beta - \beta_0\|_2  \le c_1 t \sqrt{\frac{d}{n}} + c_2 r_m t \sqrt{ \log(n)} e^{c_B' \|\hat \beta - \beta_0\|_2^2  }\,.
  \end{equation}
   Fix $t>0$.  For some  $\delta>0$ (independent of $m$ and $ n$) if the inequality  \ref{eq:convergence-noneq} implies  $\|\hat \beta - \beta_0\|_2\ge \delta$, then there exists arbitrarily large $m$ and $n$ such that
   \[
   \Pr \big(\|\hat \beta - \beta_0\|_2\ge \delta\big) \ge 1-e^{-t^2}
   \] 
    This contradicts in probability convergence and hence almost sure convergence for $\hat \beta \to \beta_0$. 
   Therefore, the inequality that $\|\hat \beta - \beta_0\|_2 \le 1$ is true and using this in the right hand side of the  inequality \ref{eq:convergence-noneq}  we have
   \begin{equation}
   \label{eq:concentration-coeff}
      \|\hat \beta - \beta_0\|_2  \le c_3 t \left(\sqrt{\frac{d}{n}} + r_m  \sqrt{ \log(n)} \right)\,, 
   \end{equation}  with probability at least $1-e^{-t^2}$. 
  
Having the concentration for $\hat \beta$ we use lemma \ref{lemma:concentration-target}  to show that  the estimated target regression function concentrates around the true regression function at a rate $\sqrt{{d}/{n}} + r_m  \sqrt{ \log(n)}$. Finally, in lemma \ref{lemma:ub-excess-risk} we let $a_{m, n} = \sqrt{{d}/{n}} + r_m  \sqrt{ \log(n)}$ and conclude that there exists a constant $C>0$ such that 
  \[
  \sup_{(\bbP, \bbQ)} E\big[\cE_Q(\hat f)\big] \le C \left(\sqrt{\frac{d}{n}} + r_m  \sqrt{ \log(n)} \right)\,. 
  \]
\end{proof}

\begin{lemma}
\label{lemma:concentration-A}
There exists a constant  $c>0$ such that for any $h, \beta,$ and $\|a\|_2 = 1$ we have 
\[
\bbQ \Big((\hat \bbQ_n - \bbQ) [a^\top \omega(h, \beta; z)] > t \Big) \le \exp\Big(- \frac{nt^2}{c^2}\Big)\,.
\]
\end{lemma}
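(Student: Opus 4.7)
The plan is to exploit the fact that $a^\top\omega(h,\beta;z)$ is, up to a bounded factor, a linear functional of the covariate $x$, and then apply a standard sub-Gaussian concentration inequality to the sample mean.

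First, I would unpack the integrand. Since $y\in\{0,1\}$ and $\sigma(\cdot)\in[0,1]$, the scalar $[-y + \sigma(\gamma(h(x)) + \beta^\top x)]$ lies in $[-1,1]$ for every choice of $h$, $\beta$, and $(x,y)$. Writing $T(x)=x$ as in the proof of Theorem~\ref{thm:upper-bound}, this gives the deterministic bound
\[
\bigl|a^\top \omega(h,\beta;z)\bigr| \;\le\; |a^\top x|\,,
\]
which is key because the right-hand side does not depend on $h$ or $\beta$. Therefore all of these random variables share the same tail envelope.

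Next, I would invoke Assumption~\ref{ass:sub-gaussian}: for any unit vector $a$, $a^\top x$ is sub-Gaussian with parameter $M$ under $\bbQ_X$. Combined with the envelope above, $W\triangleq a^\top\omega(h,\beta;z)$ is sub-Gaussian with parameter $M$ uniformly in $h$, $\beta$, $\|a\|_2=1$. Centering preserves this up to a fixed multiplicative constant, so $W-\bbE_\bbQ W$ is sub-Gaussian with parameter at most $c_0 M$ for some absolute $c_0$.

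Finally, since $(x_i^{(Q)},y_i^{(Q)})\sim\bbQ$ i.i.d., the quantity $(\hat\bbQ_n-\bbQ)[a^\top\omega(h,\beta;z)]$ is a sample mean of $n$ i.i.d.\ centered sub-Gaussian random variables with common parameter $c_0 M$. Applying the standard Hoeffding-type bound for sums of independent sub-Gaussian variables then yields, for some universal constant $c>0$ absorbing $c_0 M$,
\[
\bbQ\Bigl((\hat\bbQ_n-\bbQ)[a^\top\omega(h,\beta;z)] > t\Bigr) \;\le\; \exp\!\Bigl(-\tfrac{nt^2}{c^2}\Bigr).
\]
There is no serious obstacle here: the bound is pointwise in $(h,\beta,a)$, so no union bound over infinite-dimensional classes is needed, and the boundedness of $-y+\sigma(\cdot)$ neutralizes any dependence of the sub-Gaussian parameter on the nuisance arguments. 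The only care required is to track that centering at most inflates the sub-Gaussian constant by an absolute factor.
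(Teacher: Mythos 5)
Your proposal is correct and follows essentially the same route as the paper: bound $|a^\top\omega(h,\beta;z)|$ by $|a^\top x|$ uniformly in $(h,\beta)$ via the boundedness of $-y+\sigma(\cdot)$, invoke the sub-Gaussianity of $x$ under $\bbQ$ (Assumption~\ref{ass:sub-gaussian}), center, and apply a Hoeffding-type bound for i.i.d.\ sub-Gaussian variables. The only cosmetic difference is that the paper verifies sub-Gaussianity of the centered summand through an explicit moment-generating-function computation, whereas you argue through the standard tail/MGF equivalence; both yield the same uniform constant.
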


\begin{proof}[Proof of lemma \ref{lemma:concentration-A}]
Using the inequalities $\|\omega(h, \beta; z)\|_2 \le \|x\|_2$, $|a^\top \omega(h, \beta; z)| \le |a^\top x|$ and sub-Gaussian tail for $x$ (Assumption \ref{ass:sub-gaussian}) we see the following: for any $u\in \reals^d$
\[
\begin{aligned}
E_\bbQ \Big[\exp\big(u^\top \omega(h, \beta; z) - u^\top E_\bbQ[\omega(h, \beta; z)]\big)\Big] & \le e^{\|u\|_2 E_\bbQ\big[\|x\|_2\big]} E_\bbQ \Big[\exp\big(\big|u^\top \omega(h, \beta; z)\big|\big)\Big]\\
&  \le e^{\|u\|_2 E_\bbQ\big[\|x\|_2\big]} E_\bbQ \Big[\exp\big(\big|u^\top x\big|\big)\Big]\\
& \le e^{\|u\|_2 E_\bbQ\big[\|x\|_2\big]} C_1 e^{C_2\|u\|_2^2 }\\
& \le c_1' e^{c_2'\|u\|_2^2 }
\end{aligned}
\] for some $c_1', c_2'>0$ and $\|u\|_2 \ge u_0.$ Hence, $\omega(h, \beta; z) -  E_\bbQ[\omega(h, \beta; z)]$ also has sub-Gaussian tail. We use Hoeffding's concentration bound to conclude the lemma.  

\end{proof}

\begin{lemma}
\label{lemma:conc-to-as}
For some constant $C>0$ let  the sequence of random variables $\{X_n\}_{n\ge 1}$ to satisfy the concentration inequality that for any $t>0$
\[
P(|X_n| > t) \le e^{-c\big(\frac{t}{r_n}\big)^2}\,.
\] If $r_n\sqrt{\log (n)}\to 0$ then $X_n \to 0$ almost surely.
\end{lemma}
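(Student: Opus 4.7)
The plan is to prove this via a direct application of the Borel--Cantelli lemma. By definition, $X_n \to 0$ almost surely if and only if for every $\epsilon > 0$, $\Pr(|X_n| > \epsilon \text{ infinitely often}) = 0$. By the first Borel--Cantelli lemma, it suffices to show that $\sum_{n \ge 1} \Pr(|X_n| > \epsilon) < \infty$ for every fixed $\epsilon > 0$.

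Fix $\epsilon > 0$. Applying the hypothesized tail bound with $t = \epsilon$ yields
\[
\Pr(|X_n| > \epsilon) \le \exp\left(-\frac{c\epsilon^2}{r_n^2}\right).
\]
The key step is to show that the right-hand side is summable in $n$. Here I would use the assumption $r_n \sqrt{\log n} \to 0$, which is equivalent to $r_n^2 \log n \to 0$, i.e., $1/r_n^2$ grows faster than any constant multiple of $\log n$. In particular, for $n$ large enough (depending on $\epsilon$), we have $c\epsilon^2/r_n^2 \ge 2 \log n$, so
\[
\exp\left(-\frac{c\epsilon^2}{r_n^2}\right) \le \frac{1}{n^2},
\]
and $\sum_n n^{-2} < \infty$ ensures summability. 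Borel--Cantelli then gives $\Pr(|X_n| > \epsilon \text{ i.o.}) = 0$, and since $\epsilon > 0$ was arbitrary we conclude that $X_n \to 0$ almost surely.

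There is no real obstacle here; the proof is essentially a one-line calculation once the correct threshold relating $r_n$ to $\log n$ is identified. The only subtlety worth stating carefully is that the assumption $r_n\sqrt{\log n} \to 0$ is slightly stronger than mere summability of $\exp(-c\epsilon^2/r_n^2)$ for a single $\epsilon$ --- it gives summability simultaneously for every $\epsilon > 0$, which is exactly what Borel--Cantelli requires for the almost-sure conclusion (as opposed to convergence in probability). Note that a weaker assumption such as $r_n \to 0$ alone would only yield convergence in probability; the $\sqrt{\log n}$ factor is precisely what upgrades this to the almost sure statement.
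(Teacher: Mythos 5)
Your proof is correct and uses essentially the same argument as the paper: a Borel--Cantelli application with the tail bound compared to $\sum_n n^{-2}$, driven by the condition $r_n\sqrt{\log n}\to 0$. The only cosmetic difference is that you fix a threshold $\epsilon>0$ and show $c\epsilon^2/r_n^2 \ge 2\log n$ eventually, whereas the paper applies the bound at the shrinking threshold $t_n = r_n\sqrt{2\log n}/\sqrt{c}$ and then lets $t_n\to 0$; both yield the same conclusion.
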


\begin{proof}[Proof of lemma \ref{lemma:conc-to-as}]
Letting $t = \frac{r_n\sqrt{2\log (n)}}{\sqrt{c}}$ in the concentration inequality we see that \[
\sum_{n=1}^\infty P\Bigg(|X_n| > \frac{r_n\sqrt{2\log (n)}}{\sqrt{c}}\Bigg) \le \sum_{n=1}^\infty \frac{1}{n^2} <\infty.
\] Using Borel-Cantelli lemma we have
\[
P\Bigg(|X_n| > \frac{r_n\sqrt{2\log (n)}}{\sqrt{c}}~~ \text{infinitely often}\Bigg) = 0\,.
\] This implies almost sure convergence for $X_n \to 0$. 
\end{proof}

\begin{lemma}
\label{lemma:as-convergence-of-B}
$M(\hat h, \hat \beta) - M(h_0, \hat \beta)$ converges almost surely to zero. 
\end{lemma}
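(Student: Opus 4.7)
The plan is to obtain the almost sure convergence by combining the non-asymptotic concentration bounds for $(B) \triangleq M(\hat h, \hat\beta) - M(h_0, \hat\beta)$ established in Lemmas \ref{lemma:concentration-B} and \ref{lemma:concentration-B2} with the generic Borel--Cantelli conversion provided by Lemma \ref{lemma:conc-to-as}. The starting point is the representation
\[
(B) = \bbE_\bbQ\big[x\big(\sigma(\gamma(\hat h(x)) + \hat\beta^\top x) - \sigma(\gamma(h_0(x)) + \hat\beta^\top x)\big)\big],
\]
where the integrand is controlled pointwise by $|\hat h(x) - h_0(x)|$ weighted by $\|x\|_2$, because $\sigma$ is $\tfrac14$-Lipschitz and the derivative of $\gamma$ is bounded on any compact subset of $(0,1)$. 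Sub-Gaussian concentration of $\hat h(x) - h_0(x)$ (Assumption \ref{assumption:source-reg-fn-concentration}) together with the sub-Gaussian tail of $x$ (Assumption \ref{ass:sub-gaussian}) then produce, via Lemmas \ref{lemma:concentration-B} and \ref{lemma:concentration-B2}, a tail bound of the form
\[
\Pr\big(\|(B)\|_2 \ge c_B\, r_m\, t\, \sqrt{\log n}\; e^{c_B'\|\hat\beta\|_2^2}\big) \le \tfrac12 e^{-t^2}, \quad t > 0,
\]
with deterministic constants $c_B, c_B' > 0$ coming from the Lipschitz and sub-Gaussian constants.

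The main obstacle is the factor $e^{c_B'\|\hat\beta\|_2^2}$: it prevents a direct application of Lemma \ref{lemma:conc-to-as}, since $\hat\beta$ is itself random and not yet known to be bounded at this stage of the overall argument. I would remove this factor by first establishing that $\|\hat\beta\|_2 \le R$ with high probability for some deterministic constant $R$, without invoking the very conclusion we are trying to prove. A clean way to do this is via coercivity: the population objective $m(\beta)$ is strongly convex with minimizer $\beta_0$ (Assumption \ref{assump:strong-convexity}), so the level sets $\{\beta : m(\beta) \le m(\beta_0) + 1\}$ are contained in a fixed ball $\{\|\beta\|_2 \le R\}$; by a standard uniform concentration argument on the empirical loss restricted to a somewhat larger ball, the minimizer $\hat\beta$ of the sample objective lies in this ball on an event whose complement has probability summable in $n$.

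Restricting to the event $E_n = \{\|\hat\beta\|_2 \le R\}$, the tail bound becomes a deterministic-rate bound
\[
\Pr\big(\|(B)\|_2 \ge s\big) \le \tfrac12 \exp\big(-(s/s_n)^2\big), \quad s_n = C\, r_m \sqrt{\log n}, \quad C = c_B\, e^{c_B' R^2}.
\]
This is exactly the sub-Gaussian tail required by Lemma \ref{lemma:conc-to-as}, and the condition $s_n \sqrt{\log n} = C r_m \log n \to 0$ is satisfied whenever $r_m$ decays polynomially in $m$ and $m \ge n$ (the regime assumed throughout the paper). Hence $\|(B)\|_2 \to 0$ almost surely on $E_n$, and combining with Borel--Cantelli applied to $\Pr(E_n^c)$ delivers the claimed unconditional almost sure convergence of $(B)$ to zero.
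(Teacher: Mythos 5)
Your strategy is genuinely different from the paper's, and its load-bearing step --- the a priori bound $\|\hat\beta\|_2\le R$ on an event whose complement has summable probability --- is asserted rather than proved, and as sketched it does not go through directly. The empirical objective whose minimizer you must localize is $\frac1n\sum_i\ell\big(y_i^{(Q)},\gamma(\hat h(x_i^{(Q)}))+\beta^\top x_i^{(Q)}\big)$, whose offset is $\gamma(\hat h)$, not $\hat h$. Assumption \ref{assumption:source-reg-fn-concentration} controls $|\hat h-h_0|$, but $\gamma$ is unbounded and non-Lipschitz near $0$ and $1$, so closeness of $\hat h$ to $h_0$ gives no direct control of $|\gamma(\hat h(x))-\gamma(h_0(x))|$, and hence no straightforward ``uniform concentration of the empirical loss'' around the population loss $m(\beta)$ of Assumption \ref{assump:strong-convexity}. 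To make your coercivity step rigorous you would have to work at the gradient level, where the algebraic identity used in Lemma \ref{lemma:concentration-B} converts the plug-in error into the benign factor $|\hat h(x)-h_0(x)|e^{|\beta^\top x|}$: restrict attention to the sphere $\|\beta-\beta_0\|_2=R$, control the $(A)$- and $(B)$-type terms uniformly over that sphere (a covering argument in $\beta$), and invoke convexity of the empirical objective --- in effect rerunning the fixed-point argument of the main proof before you are permitted to use its conclusion. In addition, your reduction to Lemma \ref{lemma:conc-to-as} needs the extra rate condition $r_m\log n\to 0$ and a choice of how to index the joint limit in $(m,n)$, neither of which the lemma as stated requires.

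For comparison, the paper's proof sidesteps the $e^{c_B'\|\hat\beta\|_2^2}$ obstacle entirely and needs none of this machinery: since $\sigma\in(0,1)$ and $y\in\{0,1\}$, one has $\|\omega(\hat h,\beta;z)-\omega(h_0,\beta;z)\|_2\le\|x\|_2$ for every $\beta$ and every $h$, so $\|x\|_2$ is a single integrable dominating function (Assumption \ref{ass:sub-gaussian}); Assumption \ref{assumption:source-reg-fn-concentration} combined with Borel--Cantelli (Lemma \ref{lemma:conc-to-as}) gives the pointwise almost sure convergence $\hat h(x)\to h_0(x)$, extended over the support via a countable dense set and continuity, and the dominated convergence theorem then yields $M(\hat h,\beta)-M(h_0,\beta)\to 0$ with $\hat\beta$ simply held fixed when passing to the limit in $m$ --- no quantitative tail bound, no boundedness of $\hat\beta$, and no rate condition. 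So your quantitative route is salvageable, but only after the gradient-level localization of $\hat\beta$ is actually carried out; as written, that step is a genuine gap, whereas the dominated-convergence argument requires no control of $\hat\beta$ at all.
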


\begin{proof}[Proof of lemma \ref{lemma:as-convergence-of-B}]
We start by  recalling that there exists a constant  $c>0$ such that for $\bbP_X$ almost surely any  $x$ (we denote the $\bbP_X$ probability one set of such $x$'s as $E_P$) and any $t>0$ the following probability bound holds: \[
\bbP \Big(|\hat h(x) - h_0(x)|>t\Big) \le e^{-c\big(\frac{t}{r_m}\big)^2}\,.
\]  Using lemma \ref{lemma:conc-to-as} we conclude that for each $x\in E_P$ there exists a $\bbP$  probability one set $\Omega_x$ such that for any $\omega \in \Omega_x$ we have $\hat h(x; \omega) \to h_0(x)$. For a countable dense subset of $C \subseteq E_P$ the set $\Omega_0 = \cap_{x\in C} \Omega_x$ is again a $\bbP$-probability one set and for any $x \in C$ and $ \omega \in \Omega_0$ we have $\hat h(x; \omega) \to h_0(x)$. 
Since the functions $x\mapsto\hat h(x, \omega), h_0(x)$ are continuous,  we conclude that for any $x \in E_P$ and $ \omega \in \Omega_0$ we have \[
\hat h(x; \omega) \to h_0(x).
\]
We fix a $\hat \beta$. Since the randomness of $\hat \beta$ is independent of the randomness for $\hat h$, fixing a value of $\hat \beta$ has no impact on the randomness of $\hat h$. 
Finally, observing  the facts that $\|\omega(h, \beta; z)\|\le \|x\|$ for any $h$ and $\beta$ and $E_\bbQ\big[\|x\|_2\big]<\infty$ we use dominated convergence theorem to conclude the lemma. 

\end{proof}

\begin{lemma}
\label{lemma:concentration-B}
For some $c, c'>0$ and any $\beta\in \reals^d$ the following inequality holds:
\[
\big\| M(\hat h,  \beta) -  M(h_0,  \beta) \big\|_2 \le c\big\|\hat h(x) - h_0(x)\big\|_{2,   \bbQ_X}e^{c'\|\beta\|_2^2}\,.
\]
\end{lemma}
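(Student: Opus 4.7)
The plan is to reduce the $\ell_2$-norm of the vector-valued expectation to a pointwise Lipschitz estimate on the integrand. First I would expand
\[
M(\hat h, \beta) - M(h_0, \beta) = \bbE_\bbQ \big[x \cdot \big(\sigma(\gamma(\hat h(x)) + \beta^\top x) - \sigma(\gamma(h_0(x)) + \beta^\top x) \big)\big],
\]
pull the norm inside via the triangle inequality, and ultimately apply Cauchy--Schwarz to separate the factor $|\hat h(x) - h_0(x)|$ (which produces $\|\hat h - h_0\|_{2,\bbQ_X}$) from a prefactor depending on $\|x\|_2$ and $\beta$. This reduces the problem to finding a sharp constant $L(t)$ such that $|\sigma(\gamma(a)+t) - \sigma(\gamma(b)+t)| \le L(t)|a-b|$ uniformly for $a,b \in (0,1)$.

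The key computation is the Lipschitz estimate. Writing $f_t(h) = \sigma(\gamma(h)+t) = \frac{he^t}{1 + h(e^t-1)}$, a direct differentiation gives the clean formula $f_t'(h) = \frac{e^t}{[1 + h(e^t-1)]^2}$. A short case analysis on the sign of $t$ shows $1 + h(e^t-1) \ge 1$ when $t\ge 0$ and $1 + h(e^t-1) \ge e^t$ when $t<0$ (using $h \in (0,1)$), so that $f_t'(h) \le e^{|t|}$ in both cases. Taking $t = \beta^\top x$ yields the pointwise bound
\[
\big|\sigma(\gamma(\hat h(x))+\beta^\top x) - \sigma(\gamma(h_0(x))+\beta^\top x)\big| \le e^{|\beta^\top x|}\, |\hat h(x) - h_0(x)|.
\]

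To finish, I would combine this with Cauchy--Schwarz to obtain
\[
\|M(\hat h,\beta) - M(h_0,\beta)\|_2 \le \sqrt{\bbE_\bbQ[\|x\|_2^2 e^{2|\beta^\top x|}]} \cdot \|\hat h - h_0\|_{2,\bbQ_X},
\]
and then bound the moment factor by $c\,e^{c'\|\beta\|_2^2}$. For this last step, a second application of Cauchy--Schwarz separates $\bbE_\bbQ[\|x\|_2^4]$ (finite by the sub-Gaussian tail of $x$ in Assumption \ref{ass:sub-gaussian}, which controls all polynomial moments) from $\bbE_\bbQ[e^{4|\beta^\top x|}]$; the latter is controlled by the sub-Gaussian MGF bound $\bbE_\bbQ[e^{s\beta^\top x}] \le e^{C s^2 \|\beta\|_2^2}$ applied with $s = \pm 4$, giving the claimed exponential dependence on $\|\beta\|_2^2$.

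The main obstacle is the Lipschitz calculation: a naive bound treating $\sigma$ and $\gamma$ separately fails because $\gamma'(h) = \tfrac{1}{h(1-h)}$ blows up near the endpoints, so the proof must exploit the algebraic cancellation in the expression for $f_t'$ that yields the uniform constant $e^{|t|}$. Once this observation is in hand, the remaining steps are standard consequences of Cauchy--Schwarz and the sub-Gaussian tail assumption.
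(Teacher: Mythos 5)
Your proposal is correct and follows essentially the same route as the paper: your mean-value bound $f_t'(h)=e^t/[1+h(e^t-1)]^2\le e^{|t|}$ is exactly the algebraic cancellation the paper isolates in Lemma~\ref{lemma:inequality-A}, and the subsequent repeated Cauchy--Schwarz together with the sub-Gaussian moment bounds $\bbE_\bbQ[\|x\|_2^4]<\infty$ and $\bbE_\bbQ[e^{4|\beta^\top x|}]\le c\,e^{c'\|\beta\|_2^2}$ mirrors the paper's argument step for step.
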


\begin{proof}[Proof of lemma \ref{lemma:concentration-B}]
We note that 
\[
\begin{aligned}
\omega(\hat h, \beta; z) - \omega(h_0, \beta; z) &= x \big[ \sigma\big(\gamma(\hat h(x)) + \beta^\top x\big) - \sigma\big(\gamma( h_0(x)) + \beta^\top x\big)
 \big]\\
 &= x \left[ \frac{\hat h(x) e^{\beta^\top x}}{\hat h(x) e^{\beta^\top x} + (1-\hat h(x))} - \frac{ h_0(x) e^{\beta^\top x}}{ h_0(x) e^{\beta^\top x} + (1- h_0(x))} \right]\\
 &=  \frac{x(\hat h(x) - h_0(x))e^{\beta^\top x}}{\big\{\hat h(x) e^{\beta^\top x} + (1-\hat h(x))\big\}\big\{h_0(x) e^{\beta^\top x} + (1- h_0(x))\big\}}\,.
\end{aligned}
\]

This follows 

\[
\begin{aligned}
\big\| M(\hat h, \hat \beta) - M(h_0, \hat \beta) \big\|_2 &\le E_{\hat \bbQ} \Big[\big\|\omega(\hat h, \beta; z) - \omega(h_0, \beta; z)\big\|_2\Big]\\
& \le E_{ \bbQ}\Big[ \|x\|_2 \big|\hat h(x) - h_0(x)\big|e^{|\beta^\top x|}\Big]\\
& \le \big\|\hat h(x) - h_0(x)\big\|_{2, \bbQ_X}\Big\| \|x\|_2 e^{|\beta^\top x|}\Big\|_{2, \bbQ_X}\\
& \le \big\|\hat h(x) - h_0(x)\big\|_{2, \bbQ_X}\big\|\|x\|\big\|_{4,  \bbQ_X}\big\|  e^{|\beta^\top x|}\big\|_{4, \bbQ_X}\,,
\end{aligned}
\]
where the last two inequalities are obtained via repeated use of Cauchy-Schwarz inequality. 
Here, from sub-Gaussianity of the covariate $x$ under $\bbQ$ distribution we see that   $\big\|\|x\|\big\|_{4, \bbQ_X}^4 = E_\bbQ \big[\|x\|^4\big] = c_1 <\infty $ and \[
\big\|  e^{|\beta^\top x|}\big\|_{4, \bbQ_X}^4 \le E_\bbQ \big[e^{4|\beta^\top x|}\big] \le c_2 e^{c_3\|\beta\|_2^2}
\] for some $c_1, c_2, c_3>0.$ Using these inequalities we conclude the proof. 

\end{proof}

\begin{lemma}
\label{lemma:inequality-A}
For $u, v\in [0, 1]$ and $b\in \reals$ the following inequality holds, 
\[
\frac{e^b}{\big(ue^b + (1-u)\big)\big(ve^b + (1-v)\big)} \le e^{|b|}\,.
\]
\end{lemma}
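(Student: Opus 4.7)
}
The key structural observation is that both denominators $ue^b + (1-u)$ and $ve^b + (1-v)$ are convex combinations of $1$ and $e^b$, so each lies between $\min(1,e^b)$ and $\max(1,e^b)$. In particular, for every $u\in[0,1]$ we have the uniform lower bound
\[
ue^b + (1-u) \;\ge\; \min(1, e^b) \;=\; e^{\min(b,0)}.
\]
The plan is to apply this bound to both factors in the denominator and then verify that the resulting exponent collapses to $|b|$.

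Applying the bound to each factor gives
\[
\frac{e^b}{\bigl(ue^b+(1-u)\bigr)\bigl(ve^b+(1-v)\bigr)} \;\le\; \frac{e^b}{e^{2\min(b,0)}} \;=\; e^{\,b - 2\min(b,0)}.
\]
A short case check finishes the argument: if $b\ge 0$ then $\min(b,0)=0$ so the exponent equals $b=|b|$, and if $b<0$ then $\min(b,0)=b$ so the exponent equals $b-2b=-b=|b|$.

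Equivalently, one could split cases at the outset. For $b\ge 0$, the claim is $\bigl(ue^b+(1-u)\bigr)\bigl(ve^b+(1-v)\bigr) \ge 1$, which follows since each factor is a convex combination of $1$ and $e^b\ge 1$ and hence is at least $1$. For $b<0$, clearing denominators reduces the claim to $\bigl(ue^b+(1-u)\bigr)\bigl(ve^b+(1-v)\bigr) \ge e^{2b}$; since now $e^b<1$, each factor is a convex combination of $e^b$ and $1$ and hence is at least $e^b$, giving the desired inequality for the product. Either route is elementary, and I do not anticipate any nontrivial obstacles: the proof is essentially a one-line application of the observation that linear interpolation stays inside the endpoints.
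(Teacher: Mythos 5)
Your proof is correct and rests on the same elementary observation as the paper's: each denominator factor is a convex combination of two positive numbers and hence is at least their minimum. The paper merely packages this by first symmetrizing the fraction as $\bigl(ue^{b/2}+(1-u)e^{-b/2}\bigr)^{-1}\bigl(ve^{b/2}+(1-v)e^{-b/2}\bigr)^{-1}$ and bounding each factor below by $e^{-|b|/2}$, which avoids your (equally valid) case check on the sign of $b$.
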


\begin{proof}[Proof of lemma \ref{lemma:inequality-A}]

Note that 
\[
\frac{e^b}{\big(ue^b + (1-u)\big)\big(ve^b + (1-v)\big)} = \frac{1}{\big(ue^{\frac b2} + (1-u)e^{-\frac b2}\big)\big(ve^{\frac b2} + (1-v)e^{-\frac b2}\big)}\,.
\] Using the fact that for any $\lambda\in [0, 1]$ and $b\in \reals$
\[
\lambda e^{\frac b2} + (1-\lambda)e^{-\frac b2} \ge e^{-\frac{|b|}{2}}\,,
\] we conclude the proof. 
\end{proof}

\begin{lemma}
\label{lemma:concentration-B2}
Assume that there exists a constant $c>0$ such that for $\bbQ_X$  almost surely any $x$ the followings hold. 
\[
\bbP\big(|\hat h (x) - h_0(x)|>t\big) \le e^{-c\big(\frac{t}{r_m}\big)^2}, t > 0\,.
\]
Then with probability at least $1-\delta$ we have \[
E_\bbQ\big[(\hat h (x) - h_0(x))^2\big] \le c'  {r_m^2}{\log \Big(\frac{2}{\delta}\Big)\log(n)}  \,.
\]
\end{lemma}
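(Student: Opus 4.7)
The plan is to exploit the pointwise sub-Gaussian concentration from the hypothesis via a union bound over the $n$ target covariates; this is the ``supremum argument'' alluded to in the remark following Theorem \ref{thm:upper-bound}, and is the source of the $\log n$ factor. Intuitively, when this lemma is combined with Lemma \ref{lemma:concentration-B} to control $\|(B)\|_2$, only the behavior of $\hat h - h_0$ at the $n$ target covariates is actually needed, so it suffices to control these $n$ values simultaneously with high probability.

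First, I would condition on the target covariates $x_1^{(Q)},\dots,x_n^{(Q)}$. Almost surely each $x_i^{(Q)}$ lies in the $\bbQ_X$-full-measure set on which the hypothesis applies, so that for every $i$ and every $t > 0$,
\[
\bbP\bigl(|\hat h(x_i^{(Q)}) - h_0(x_i^{(Q)})| > t \bigm| x_i^{(Q)}\bigr) \le e^{-c(t/r_m)^2}.
\]
A union bound over $i = 1, \ldots, n$ then gives
\[
\bbP\Bigl(\max_{1 \le i \le n}\bigl|\hat h(x_i^{(Q)}) - h_0(x_i^{(Q)})\bigr|^2 > s \Bigm| x_1^{(Q)}, \ldots, x_n^{(Q)}\Bigr) \le n\,e^{-c s/r_m^2}.
\]
Setting the right-hand side equal to $\delta/2$ yields $s \le (r_m^2/c)\log(2n/\delta)$, so with probability at least $1-\delta/2$ (over the source data) the maximum, and a fortiori the empirical average $\tfrac{1}{n}\sum_{i=1}^n (\hat h(x_i^{(Q)}) - h_0(x_i^{(Q)}))^2$, are bounded by $(r_m^2/c)\log(2n/\delta)$. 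Using the crude but sufficient inequality $\log(2n/\delta) \le C\log(n)\log(2/\delta)$ for $n \ge 2$ and $\delta \in (0, 1)$ produces the stated rate.

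The one delicate point is what $E_\bbQ[(\hat h - h_0)^2]$ refers to: the union-bound plan naturally controls the empirical $L^2(\hat\bbQ_n)$ quantity evaluated at the target covariates, which is precisely the object that enters the downstream bound on $\|(B)\|_2$ in the proof of Theorem \ref{thm:upper-bound} via Lemma \ref{lemma:concentration-B}. No additional moment or empirical-process machinery should be required beyond this elementary union bound, so there is no substantial obstacle; the main choice is the threshold $s$ and the bookkeeping needed to cast the bound in the stated $r_m^2\log(2/\delta)\log(n)$ form, and the $\log n$ slack is exactly the ``unavoidable'' price of the supremum argument discussed in the remark after Theorem \ref{thm:upper-bound}.
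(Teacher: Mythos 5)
Your union bound correctly reproduces the second half of the paper's argument: conditioning on the target covariates and taking a maximum over $i=1,\dots,n$ does control the empirical quantity $E_{\hat \bbQ}\big[(\hat h(x)-h_0(x))^2\big]=\tfrac1n\sum_{i=1}^n(\hat h(x_i^{(Q)})-h_0(x_i^{(Q)}))^2$ at the level $r_m^2\log(2n/\delta)\lesssim r_m^2\log(n)\log(2/\delta)$, and this is indeed where the $\log n$ comes from. But the lemma asserts a bound on the \emph{population} quantity $E_{\bbQ}\big[(\hat h(x)-h_0(x))^2\big]$, and your proof never passes from the empirical average to this expectation. Your justification for skipping that step --- that only the values of $\hat h-h_0$ at the $n$ target covariates are needed downstream --- is not consistent with how the lemma is actually used: in the proof of Theorem \ref{thm:upper-bound} the term $(B)=M(\hat h,\hat\beta)-M(h_0,\hat\beta)$ is defined through the population score $M(h,\beta)=\bbQ[\omega(h,\beta;z)]$, and Lemma \ref{lemma:concentration-B} bounds it by $\|\hat h-h_0\|_{2,\bbQ_X}$, the $L^2(\bbQ_X)$ norm, not its empirical counterpart. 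So the population second moment is genuinely the object required, and the gap is real.

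The paper closes exactly this gap with an extra step you omit: for each fixed realization of the source data, the empirical average converges to $E_{\bbQ}\big[(\hat h(x)-h_0(x))^2\big]$ by the law of large numbers in the target sample, and Egorov's theorem is invoked to make this convergence uniform over a source-event $F$ of probability at least $1-\delta/2$, yielding the transfer inequality \eqref{eq:tech1} that adds only an $O(r_m^2\log(2/\delta))$ slack; the union-bound step (your argument) then finishes the proof. If instead you wanted to keep only the empirical quantity, you would have to rewrite the decomposition in the proof of Theorem \ref{thm:upper-bound} so that $(B)$ is measured under $\hat\bbQ_n$ rather than $\bbQ$, and then control the resulting additional term $M_n-M$ evaluated at the data-dependent $\hat h$, i.e.\ an empirical-process argument --- precisely the ``additional machinery'' you claim is unnecessary. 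As written, your proposal proves a different (weaker, empirical) statement than the lemma; the elementary union bound alone does not bound $E_{\bbQ}\big[(\hat h(x)-h_0(x))^2\big]$.
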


\begin{proof}[Proof of lemma \ref{lemma:concentration-B2}]

Since \[E_{\hat \bbQ} \big[(\hat h(x;\omega_P) - h_0(x))^2\big] \to  E_{ \bbQ} \big[(\hat h(x; \omega_P) - h_0(x))^2\big]\] almost surely, we note that there exists a $Q$-probability one set $\Omega^{(Q)}$ such that for $\omega_Q\in \Omega^{(Q)}$ and any point $\omega_P$ from $P$ sample space we have
\[E_{\hat \bbQ} \big[(\hat h(x;\omega_P) - h_0(x))^2\big](\omega_Q) \to  E_{ \bbQ} \big[(\hat h(x; \omega_P) - h_0(x))^2\big] \] as $n\to \infty$. We fix one such $\omega_Q$ and draw our attention to $\omega_P$. This is a pointwise and hence almost sure convergence on $\omega_P$.   Since almost sure convergence implies almost uniform convergence (Egorov's theorem), we have an event $F$ with $P(F) \ge 1-\frac\delta 2$ and the convergence is uniform for $\omega_P\in F$.   Hence, get an $n$ independent of $\omega_P$ such that, for any $\omega_P\in F$
\begin{equation}
\label{eq:tech1}
 E_{\bbQ} \big[(\hat h(x; \omega_P) - h_0(x))^2\big] \le E_{\hat \bbQ} \big[(\hat h(x; \omega_P) - h_0(x))^2\big](\omega_Q) + \frac{\log(2/\delta) r_m^2}{c_1}\,.
\end{equation}
Using the union bound over $\{x_i^{(Q)}(\omega_Q)\equiv x_i^{(Q)} \}$ we get 
\[
\bbP \left( \big|\hat h(x_i^{(Q)}) - h_0(x_i^{(Q)}) \big| > t; i = 1, \dots , n \right)\le n e^{-c\big(\frac{t}{r_m}\big)^2} \le e^{-c\big(\frac{t}{r_m \sqrt{\log (n)}}\big)^2}
\] when $t \ge \frac{r_m\log(n)}{\sqrt{ c(\log(n)-1)}}\ge\frac{r_m(\log(n)-1)}{\sqrt{ c(\log(n)-1)}}\ge c' r_m\sqrt{\log(n)}$. Since, $r_m\sqrt{\log(n)}\to 0$ as $m,n \to \infty$ for suitable $c_1$ 
\[
\bbP \left( \big|\hat h(x_i^{(Q)}) - h_0(x_i^{(Q)}) \big| > t; i = 1, \dots , n \right) \le e^{-c_1\big(\frac{t}{r_m \sqrt{\log (n)}}\big)^2}\,.
\] This is achieved by choosing $c_1$ such that $e^{-c_1c'^2} = 1$. So, with probability at least $1-e^{-c_1\big(\frac{t}{r_m \sqrt{\log (n)}}\big)^2}$ we have \[
E_{\hat \bbQ} \big[(\hat h(x) - h_0(x))^2\big] \le t^2\,. 
\] This implies with probability at least $1-\frac\delta 2$ we have 
\[
E_{\hat \bbQ} \big[(\hat h(x) - h_0(x))^2\big] \le \frac{\log(2/\delta) r_m^2 \log(n)}{c_1}\,.
\] Finally using the inequality \eqref{eq:tech1} (and union bound with the above event and $F$) we conclude the lemma.

\end{proof}

\begin{lemma}
\label{lemma:concentration-norm}

If $x\in \reals^d$ such that there exists a constant $c>0$ such that for any $\|a\|_2 = 1$ we have \[
P\big( a^\top x >t\big) \le e^{-ct^2}, t>0,
\] then we have \[
P\big(\|x\|_2>t\big) \le e^{-\frac{c't^2}{d}}, t>0\,, 
\]
for some $c'>0$
\end{lemma}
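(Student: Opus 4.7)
The plan is to reduce the tail bound on $\|x\|_2$ to a uniform tail bound on the scalar projections $a^\top x$ via a standard $\epsilon$-net discretization of the unit sphere $S^{d-1}$. The hypothesis already controls each direction $a^\top x$, so the only work is turning directional control into a supremum control, since $\|x\|_2 = \sup_{a \in S^{d-1}} a^\top x$.

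First I would fix $\epsilon = \tfrac12$ and pick an $\epsilon$-net $N \subset S^{d-1}$; a classical volumetric argument gives $|N| \le (1 + 2/\epsilon)^d = 5^d$. The standard approximation identity then yields
\[
\|x\|_2 \;\le\; \frac{1}{1-\epsilon}\, \max_{a \in N}\, a^\top x \;=\; 2\, \max_{a \in N}\, a^\top x,
\]
which follows from writing any unit $a$ as $b + (a-b)$ for the nearest $b \in N$ and absorbing the remainder term into $\|x\|_2$.

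Next I would apply the assumed one-dimensional sub-Gaussian bound at each $a \in N$ and union-bound over $N$:
\[
P\big(\|x\|_2 > t\big) \;\le\; P\!\left(\max_{a \in N} a^\top x > t/2\right) \;\le\; |N|\cdot e^{-ct^2/4} \;\le\; 5^{d}\, e^{-ct^2/4} \;=\; \exp\!\big(d \log 5 - c t^2/4\big).
\]
Then the final step is constant-chasing: choose $c' > 0$ small enough that, for $t^2$ at least a fixed multiple of $d^2$, the term $d \log 5$ can be absorbed into a fraction of $c t^2/4$, leaving a residual bound of the form $\exp(-c' t^2/d)$. For the remaining regime of small $t$, the inequality is achieved by taking $c'$ small enough that $e^{-c' t^2/d}$ is a loose enough upper bound (possibly shrinking $c'$ further as needed, which is allowed since the lemma only asserts existence of some $c' > 0$).

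The main obstacle is purely book-keeping: tuning $c'$ so that the net cardinality penalty $5^d = e^{d \log 5}$ is dominated by the fraction of $ct^2/4$ we sacrifice, while still obtaining a nontrivial bound of the form $e^{-c't^2/d}$. There is no probabilistic subtlety beyond the standard net argument; the dimension enters the exponent as $1/d$ precisely because the net cardinality $5^d$ forces us to give up a factor proportional to $d$ in the quadratic exponent.
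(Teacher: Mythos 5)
Your proposal is essentially the paper's own proof: the same $\tfrac12$-net of the unit sphere with cardinality at most $5^d$, the same pointwise reduction $\|x\|_2 \le 2\max_{a \in N} a^\top x$, the same union bound giving $5^d e^{-ct^2/4}$, and the same constant-chasing that absorbs $d\log 5$ into the quadratic exponent for large $t$ and then shrinks $c'$ to cover the small-$t$ regime. The only (minor) difference is your more conservative threshold $t^2 \gtrsim d^2$ where the paper uses $t^2 \gtrsim d$, and your informal treatment of small $t$ mirrors the paper's own.
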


\begin{proof}[Proof of lemma \ref{lemma:concentration-norm}]

Note that

Let $F$ is a $1/2$-cover of $\cS^{d-1}$ (the set of all norm one vectors), \ie, \ $\{x\in \reals^d: \|x-y\|_2\le 1/2 \text{ for some } y \in F\}\supseteq \cS^{d-1}$ and for any $x, y\in F$ we have $\|x-y\|_2 \ge 1/4$. Then $|F|\le 5^d$.

Note that for any $a\in \cS^{d-1}$ we can find $b\in F$ such that $\|a-b\|\le 1/2.$ This follow
\[
a^\top x \le b^\top x + \|a-b\|_2 \|x\|_2 \le b^\top x + \frac{\|x\|_2}{2} 
\]
or, \[
\|x\| \le \sup_{b \in F} b^\top x +  \frac{\|x\|_2}{2} \implies \|x\| \le 2\sup_{b \in F} b^\top x
\] Hence, 
\[
P\big(\|x\|>t\big)\le P\Big(\sup_{b\in F} b^\top x >t/2\Big) \le 5^d e^{-c_1t^2}
\] 
For $t\ge \sqrt{\frac{2c_2 d}{c_1}}$ we have
\[
P\big(\|x\|>t\big) \le  e^{c_2 d -c_1t^2} \le e^{-\frac{c_1t^2}{2d}}
\] Letting $\alpha = t \sqrt{d}$ we have \[
P\big(\|x\|>\alpha \sqrt{d}\big) \le   e^{-\frac{c_1\alpha^2}{2}}, t \ge  \sqrt{\frac{2c_2}{c_1}}
\] We choose, $c'$ such that \[
P\big(\|x\|>\alpha \sqrt{d}\big) \le   e^{-{c'\alpha^2}}, t >0\,.
\] This concludes the lemma. 
\end{proof}

\begin{lemma}
\label{lemma:tech2}
For any $h$ and $\beta$ the following holds. 
\[
\Big|\sigma\big(\gamma(h(x)) + \beta^\top x\big) - \sigma\big(\gamma(h_0(x)) + \beta_0^\top x\big)\Big| \le |h(x) - h_0(x)| e^{\big(\|\beta - \beta _0\|_2 + \|\beta_0\|_2\big)\|x\|_2}  + \big|(\beta -\beta_0)^\top x\big|\,.
\]
\end{lemma}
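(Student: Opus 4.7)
The plan is to decompose the difference via a single intermediate value, handle the change in $h$ using exactly the identity that already appears in the proof of Lemma \ref{lemma:concentration-B}, and handle the change in $\beta$ by the global Lipschitz bound on $\sigma$.

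Concretely, I would set
\[
a = \gamma(h(x)) + \beta^\top x, \quad c = \gamma(h_0(x)) + \beta^\top x, \quad b = \gamma(h_0(x)) + \beta_0^\top x,
\]
and apply the triangle inequality $|\sigma(a)-\sigma(b)| \le |\sigma(a)-\sigma(c)| + |\sigma(c)-\sigma(b)|$. The choice to vary $h$ first (while keeping $\beta$) is what produces the factor $e^{\|\beta\|_2 \|x\|_2}$, which I will then convert to the target form $e^{(\|\beta-\beta_0\|_2+\|\beta_0\|_2)\|x\|_2}$ by the triangle inequality $\|\beta\|_2 \le \|\beta-\beta_0\|_2 + \|\beta_0\|_2$.

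For the first piece, I would reuse the algebraic identity from the proof of Lemma \ref{lemma:concentration-B}, namely
\[
\sigma(\gamma(u)+s)-\sigma(\gamma(v)+s) = \frac{(u-v)e^{s}}{(ue^{s}+1-u)(ve^{s}+1-v)},
\]
with $u = h(x)$, $v = h_0(x)$, $s = \beta^\top x$. Lemma \ref{lemma:inequality-A} then bounds the denominator factor by $e^{|s|}\le e^{\|\beta\|_2\|x\|_2}$, yielding
\[
|\sigma(a)-\sigma(c)| \le |h(x)-h_0(x)|\, e^{\|\beta\|_2\|x\|_2} \le |h(x)-h_0(x)|\, e^{(\|\beta-\beta_0\|_2+\|\beta_0\|_2)\|x\|_2}.
\]
For the second piece, since $\sigma'=\sigma(1-\sigma)\le \tfrac14\le 1$, $\sigma$ is $1$-Lipschitz, so
\[
|\sigma(c)-\sigma(b)| \le |c-b| = |(\beta-\beta_0)^\top x|.
\]
Summing the two bounds gives exactly the stated inequality.

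I do not expect any real obstacle here: both ingredients (the closed-form sigmoid identity and Lemma \ref{lemma:inequality-A}) are already established, so the only substantive choice is picking the right intermediate point. Choosing $c = \gamma(h_0(x)) + \beta^\top x$ rather than $c = \gamma(h(x)) + \beta_0^\top x$ is mildly preferable because it lets the (Lipschitz) $\beta$-difference contribute the linear term $|(\beta-\beta_0)^\top x|$ directly, without any exponential dressing.
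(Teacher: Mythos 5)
Your proposal is correct, and it reaches the stated bound by a route that differs in mechanism from the paper's. The paper defines $f(a,b)=\sigma(\gamma(a)+b)$ and applies a first-order Taylor (mean value) expansion jointly in both arguments at an intermediate point $(\tilde a,\tilde b)$ on the segment, using the derivative bounds $\partial_a f\le e^{|b|}$ and $\partial_b f\le\tfrac14$; the sum $\|\beta-\beta_0\|_2+\|\beta_0\|_2$ then arises because $\tilde b$ is a convex combination of $\beta^\top x$ and $\beta_0^\top x$. You instead insert the intermediate value $\gamma(h_0(x))+\beta^\top x$ and change one argument at a time: the $h$-step is handled by the exact sigmoid identity already displayed in the proof of Lemma \ref{lemma:concentration-B} together with Lemma \ref{lemma:inequality-A}, giving the factor $e^{|\beta^\top x|}\le e^{\|\beta\|_2\|x\|_2}$ (in fact slightly tighter than needed, relaxed to the stated exponent via $\|\beta\|_2\le\|\beta-\beta_0\|_2+\|\beta_0\|_2$), and the $\beta$-step is handled by the Lipschitz bound $\sigma'\le\tfrac14\le 1$. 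Both arguments bound the same two sensitivities with the same constants; yours buys a more self-contained derivation that reuses the already-established lemmas and avoids the two-variable mean value step, while the paper's is a compact one-shot Taylor argument that does not need an explicit intermediate point in the decomposition. Either proof is acceptable as written.
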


\begin{proof}[Proof of lemma \ref{lemma:tech2}]

We start by defining \[
f(a, b) \triangleq \sigma\big(\gamma(a) + b\big) = \frac{ae^b}{ ae^b + (1-a)}\,.
\]
Note the followings. 
\[
\begin{aligned}
\partial_a f(a,b) &= \frac{e^b}{\big(ae^b + (1-a)\big)^2}\\
& = \frac{1}{\Big(ae^{\frac b2} + (1-a)e^{-\frac b2}\Big)^2} \le e^{|b|}\,,
\end{aligned}
\] and 
\[
\partial_b f(a, b) = \sigma\big(\gamma(a) + b\big) \Big(1-\sigma\big(\gamma(a) + b\big)\Big) \le \frac14\,. 
\]

From the first order Taylor series expansion we get 
\[
f(a, b) - f(a_0, b_0) = (a-a_0) \partial_a f(\tilde a, \tilde b) + (b-b_0) \partial_b f(\tilde a, \tilde b)\,,
\] where $(\tilde a, \tilde b) = \lambda (a, b) + (1-\lambda) (a_0, b_0)$ for some $0\le \lambda\le 1. $
Hence, we get 
\[
|f(a, b) - f(a_0, b_0)| \le |a-a_0| e^{|\tilde b|} + |b-b_0|\,.
\] Here, $a = h(x), a_0 = h_0(x), b =  theta^\top x $ and $b_0 = \beta_0^\top x$. In this case \[
\begin{aligned}
|\tilde b| &= |\lambda \beta^\top x + (1-\lambda) \beta_0^\top x| \\
& \le \lambda |(\beta - \beta_0)^\top x| + |\beta_0^\top x|\\
& \le \big(\|\beta - \beta _0\|_2 + \|\beta_0\|_2\big)\|x\|_2\,.
\end{aligned}
\] This concludes the lemma. 
\end{proof}

\begin{lemma}
\label{lemma:concentration-target}
There exists $c_1, c_2>0$ such that for any $\bbQ_X$-almost sure $x$ with probability at least $1 -e^{-t^2}$ we have 
\[
|\hat \eta_Q(x) - \eta_Q(x)| \le c_1 t \Big( r_m\log(n) + \sqrt{\frac dn } \Big) e^{c_2\|x\|_2}\,,
\] for $t >0$.  
\end{lemma}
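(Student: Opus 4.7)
The plan is to decompose the difference $\hat\eta_Q(x)-\eta_Q(x) = \sigma(\gamma(\hat h(x)) + \hat\beta^\top x) - \sigma(\gamma(h_0(x)) + \beta_0^\top x)$ using the pointwise Lipschitz-type bound already established in Lemma \ref{lemma:tech2}. That lemma gives
\[
|\hat\eta_Q(x)-\eta_Q(x)| \le |\hat h(x)-h_0(x)|\, e^{(\|\hat\beta-\beta_0\|_2+\|\beta_0\|_2)\|x\|_2} + |(\hat\beta-\beta_0)^\top x|,
\]
so it suffices to control the two right-hand-side terms with high probability. I would then bound $|\hat h(x)-h_0(x)|$ directly via Assumption \ref{assumption:source-reg-fn-concentration}, which yields $|\hat h(x)-h_0(x)| \le c\,r_m t$ with probability at least $1-\tfrac12 e^{-t^2}$ at $\bbQ_X$-almost every $x$ (the null set coming from the source-assumption carries over because $\bbQ_X \ll \bbP_X$ by Assumption \ref{assump:dominated-measure}).

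For the second term and the exponent in the first term, I would invoke the coefficient concentration \eqref{eq:concentration-coeff}, which gives
\[
\|\hat\beta-\beta_0\|_2 \le c_3 t\Bigl(\sqrt{\tfrac{d}{n}} + r_m\sqrt{\log n}\Bigr)
\]
with probability at least $1-\tfrac12 e^{-t^2}$. In particular, for $m,n$ sufficiently large the factor on the right is $\le 1$, so $\|\hat\beta-\beta_0\|_2 \le 1$, whence $e^{(\|\hat\beta-\beta_0\|_2+\|\beta_0\|_2)\|x\|_2} \le e^{c_2\|x\|_2}$ for $c_2 := 1+\|\beta_0\|_2$. The inner-product term is then controlled by Cauchy--Schwarz: $|(\hat\beta-\beta_0)^\top x| \le \|\hat\beta-\beta_0\|_2 \|x\|_2 \le c_3 t(\sqrt{d/n}+r_m\sqrt{\log n})\|x\|_2$, and a trivial bound $\|x\|_2 \le e^{\|x\|_2}$ absorbs this into the common exponential factor $e^{c_2\|x\|_2}$.

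Combining the two high-probability events via a union bound yields, with probability at least $1-e^{-t^2}$,
\[
|\hat\eta_Q(x)-\eta_Q(x)| \le c\,r_m t\, e^{c_2\|x\|_2} + c_3 t\Bigl(\sqrt{\tfrac{d}{n}} + r_m\sqrt{\log n}\Bigr) e^{c_2\|x\|_2},
\]
which, after absorbing constants and noting that $r_m \le r_m\sqrt{\log n}$, gives the claimed bound with a constant $c_1$ depending only on $c, c_2, c_3, \|\beta_0\|_2$. The main obstacle is not analytical but bookkeeping: one must be careful that the concentration in Assumption \ref{assumption:source-reg-fn-concentration} holds at a fixed $x$ independently of the target sample used to estimate $\hat\beta$ (so that the two events can be intersected by a simple union bound without needing a uniform-in-$x$ supremum), and that the $\bbQ_X$-null exceptional set is absorbed into the ``$\bbQ_X$-almost surely any $x$'' qualifier in the statement.
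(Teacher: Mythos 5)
Your overall route is the same as the paper's: decompose via Lemma \ref{lemma:tech2}, bound $|\hat h(x)-h_0(x)|$ by Assumption \ref{assumption:source-reg-fn-concentration}, bound $\|\hat\beta-\beta_0\|_2$ by \eqref{eq:concentration-coeff}, take a union bound, and absorb $\|x\|_2$ into the exponential via $u\le e^u$. However, there is one genuine gap in how you control the exponential factor $e^{(\|\hat\beta-\beta_0\|_2+\|\beta_0\|_2)\|x\|_2}$. You argue that ``for $m,n$ sufficiently large'' the right-hand side of \eqref{eq:concentration-coeff} is at most $1$, hence $\|\hat\beta-\beta_0\|_2\le 1$ and one may take $c_2=1+\|\beta_0\|_2$. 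But the bound in \eqref{eq:concentration-coeff} is $c_3\, t\,(\sqrt{d/n}+r_m\sqrt{\log n})$, which grows linearly in $t$, and the lemma must hold for \emph{every} $t>0$ with constants $c_1,c_2$ not depending on $t$ (nor on a $t$-dependent choice of ``sufficiently large'' $m,n$). For any fixed $m,n$, once $t$ exceeds $(c_3(\sqrt{d/n}+r_m\sqrt{\log n}))^{-1}$ your event only guarantees $\|\hat\beta-\beta_0\|_2\le c_3 t(\cdots)$, which can be arbitrarily large, and then the exponent cannot be dominated by a fixed $c_2\|x\|_2$; your argument as written establishes the claim only in the regime $t\lesssim (r_m\sqrt{\log n}+\sqrt{d/n})^{-1}$.

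The missing (easy) ingredient is exactly how the paper closes this: first prove the inequality for $t\le c_3/(r_m\log n+\sqrt{d/n})$, where the $t$-dependent part of the exponent is bounded by a constant and can be folded into $c_2$ (or, as the paper does, absorbed together with the linear term using $u\le e^u$); then observe that at the threshold value of $t$ the claimed right-hand side is already at least $c_1c_3\,e^{c_2\|x\|_2}\ge 1$ after enlarging $c_1$ so that $c_1c_3\ge 1$, while the left-hand side always satisfies $|\hat\eta_Q(x)-\eta_Q(x)|\le 1$; hence the bound extends trivially to all larger $t$. With that truncate-and-use-the-trivial-bound step added, your proof matches the paper's. (Two minor remarks: your final bound with $r_m\sqrt{\log n}$ is in fact slightly sharper than the stated $r_m\log n$, which is fine; and note that the two events you intersect are not independent, since $\hat\beta$ also depends on the source data through $\hat\eta_P$, but the union bound you invoke does not require independence, so this does no harm.)
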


\begin{proof}
We note that \[
\begin{aligned}
|\hat \eta_Q(x) - \eta_Q(x)| & = \Big|\sigma\big(\gamma(\hat h(x)) + \beta_0^\top x\big) - \sigma\big(\gamma(h_0(x)) + \beta_0^\top x\big)\Big| \\
& \le |\hat h(x) - h_0(x)| e^{\big(\|\hat \beta - \beta _0\|_2 + \|\beta_0\|_2\big)\|x\|_2}  + \big|(\hat \beta -\beta_0)^\top x\big|
\end{aligned}
\] from lemma \ref{lemma:tech2}. From Assumption \ref{assumption:source-reg-fn-concentration} (concentration of source regression function) we see that there exists a $c_h>0$ such that for any $\bbQ_X$-almost sure $x$ with probability at least $1 - \frac12 e^{-t^2}$ we have $|\hat h(x) - h_0(x)| \le c_h t r_m$. Fix such  an $x$. From equation \eqref{eq:concentration-coeff} we see that  there exists a constant $c_\beta>0$ such that with probability at least  $1 - \frac12 e^{-t^2}$ we have $\|\hat \beta - \beta_0\|\le c_\beta t (r_m\log(n) + \sqrt{d/n}).$  Using union bound over the two probability bounds before, we see that  with probability at least $1-e^{-t^2}$ the inequalities 
\[
|\hat h(x) - h_0(x)| \le c_h t r_m, ~~~ \|\hat \beta - \beta_0\|\le c_\beta t (r_m\log(n) + \sqrt{d/n})\,,
\] hold simultaneously
and under this event we have 
\[
\begin{aligned}
|\hat \eta_Q(x) - \eta_Q(x)| 
& \le |\hat h(x) - h_0(x)| e^{\big(\|\hat \beta - \beta _0\|_2 + \|\beta_0\|_2\big)\|x\|_2}  + \big|(\hat \beta -\beta_0)^\top x\big|\\
& \le c_ht r_m e^{\Big(c_\beta t (r_m\log(n) + \sqrt{d/n}) + \|\beta_0\|_2 \Big)\|x\|_2} + c_\beta t \Big( r_m\log(n) + \sqrt{\frac dn } \Big)\|x\|_2\\
& \le \left[c_ht r_m + \frac{c_\beta t \Big( r_m\log(n) + \sqrt{\frac dn } \Big)}{c_\beta t \Big( r_m\log(n) + \sqrt{\frac dn } \Big) + \|\beta_0\|_2}\right] e^{\Big(c_\beta t (r_m\log(n) + \sqrt{d/n}) + \|\beta_0\|_2 \Big)\|x\|_2}\\
& ~~~ \left(\text{This is obtained using } x \le e^x \text{ for }x>0.\right)\\
& \le c_1 t \Big( r_m\log(n) + \sqrt{\frac dn } \Big) e^{c_2\|x\|_2}\,,
\end{aligned}
\] for $t \le  \frac{c_3}{r_m\log(n) + \sqrt{\frac dn }}$ and some $c_1, c_2, c_3>0$.  Note that this event depends on $x$ whereas the constants does not. 

At $t = \frac{c_3}{r_m\log(n) + \sqrt{\frac dn }}$ we see that the upper bound is $c_1c_3e^{c_2\|x\|_2}\ge c_1c_3$. We replace $c_1$ by $c_1 \vee \frac 1{c_3}$  which again  gives us a valid upper bound and after such replacement we have $c_1c_3\ge 1$. Since $|\hat \eta_Q(x) - \eta_Q(x)| \le 1$ we have the inequality for all $t>0$. This concludes the proof. 

\end{proof}

\begin{lemma}
\label{lemma:ub-excess-risk}
Let us assume there exists $c_1, c_2>0$ and a sequence $\{a_{m,n}\}_{m, n\in \bbN}$  satisfying that $a_{m,n} \to 0$ as $m, n \to \infty$ such  that  for any $\bbQ_X$-almost sure $x$ with probability at least $1 -e^{-t^2}$ we have 
\[
|\hat \eta_Q(x) - \eta_Q(x)| \le c_1 t a_{m,n} e^{c_2\|x\|_2}\,,
\] for $t >0$. Then under the margin condition with parameter $\alpha$ (Assumption \ref{assump:margin-condition}) there exists a constant $C>0$ such that the following holds. 
\[
\sup_{(\bbP, \bbQ)} E \big[\cE_Q(\hat f)\big] \le C a_{m, n}^{1+\alpha}\,. 
\]
\end{lemma}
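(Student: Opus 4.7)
My plan is to follow the classical Audibert--Tsybakov reduction from the excess risk of a plug-in classifier to a pointwise concentration statement for $\hat\eta_Q - \eta_Q$, combined with the margin condition. The starting point is the identity
\[
\cE_Q(\hat f) \;=\; E_{\bbQ_X}\!\bigl[|2\eta_Q(x) - 1|\,\bbI(\hat f(x) \ne f_*(x))\bigr],
\]
together with the fact that $\{\hat f \ne f_*\}\subseteq \{|\hat\eta_Q - \eta_Q| \ge |\eta_Q - 1/2|\}$, which yields
\[
\cE_Q(\hat f) \;\le\; 2\,E_{\bbQ_X}\!\bigl[|\eta_Q - 1/2|\,\bbI(|\hat\eta_Q - \eta_Q| \ge |\eta_Q - 1/2|)\bigr].
\]
Taking expectation over the training data $H$ and using Fubini reduces matters to bounding the integral $\int |\eta_Q(x) - 1/2|\,P_H(|\hat\eta_Q(x) - \eta_Q(x)| \ge |\eta_Q(x) - 1/2|)\,d\bbQ_X(x)$.

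Next, I would apply the hypothesis of the lemma with $t = |\eta_Q(x) - 1/2|/(c_1 a_{m,n} e^{c_2\|x\|_2})$, converting the pointwise tail probability into the Gaussian-type bound $\exp(-|\eta_Q(x) - 1/2|^2/V(x)^2)$, where $V(x) := c_1 a_{m,n} e^{c_2\|x\|_2}$. To extract the margin power $1+\alpha$, I would pick a threshold $\delta = K a_{m,n}$ with $K$ a constant and split the integral at $|\eta_Q - 1/2| = \delta$. On $\{|\eta_Q - 1/2| \le \delta\}$ the margin condition (Assumption \ref{assump:margin-condition}) gives $\bbQ_X(|\eta_Q - 1/2|\le \delta) \le C_\alpha \delta^\alpha$, and since the integrand is bounded by $\delta$, this region contributes at most $C_\alpha K^{1+\alpha} a_{m,n}^{1+\alpha}$. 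On $\{|\eta_Q - 1/2| > \delta\}$, using $|\eta_Q - 1/2|\le 1/2$ together with the monotonicity $\exp(-u^2/V^2)\le \exp(-\delta^2/V^2)$ for $u > \delta$ reduces the task to bounding $E_{\bbQ_X}[\exp(-\delta^2/V(x)^2)]$.

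The main obstacle is controlling this last expectation in the presence of the $x$-dependent scale $V(x)$: because $V(x) \to \infty$ as $\|x\|_2 \to \infty$, the integrand approaches $1$ on a set of positive $\bbQ_X$-probability, so a fixed constant $K$ alone cannot produce a bound that decays with $a_{m,n}$. I would handle this by further splitting on $\{\|x\|_2 \le R\}$ versus $\{\|x\|_2 > R\}$ for a radius $R$ to be chosen. On the bounded region $V(x) \le c_1 a_{m,n} e^{c_2 R}$, so the integrand is at most $\exp(-(K/c_1)^2 e^{-2c_2 R})$; on the tail, the sub-Gaussian assumption on $x$ (Assumption \ref{ass:sub-gaussian} together with Lemma \ref{lemma:concentration-norm}) gives $\bbQ_X(\|x\|_2 > R) \le \exp(-c' R^2/d_T)$. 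Balancing $R$ and $K$ against $a_{m,n}$, and absorbing finite exponential moments of $\|x\|_2$ under $\bbQ_X$ into the final constant $C$, delivers the bound $C a_{m,n}^{1+\alpha}$. The most delicate point is making sure that this balance does not introduce an additional polylogarithmic factor in $1/a_{m,n}$, since the exponential-in-$\|x\|_2$ scale in the concentration interacts non-trivially with the sub-Gaussian tail of $\|x\|_2$ under $\bbQ_X$.
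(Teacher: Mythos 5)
Your overall skeleton (reduce the excess risk via $\{\hat f\neq f_*\}\subseteq\{|\hat\eta_Q-\eta_Q|\ge|\eta_Q-1/2|\}$, invoke the pointwise concentration, and split on $\|x\|_2$ using the sub-Gaussian tail) matches the paper's proof, which follows Audibert--Tsybakov. However, there is a genuine gap at your treatment of the far region $\{|\eta_Q-1/2|>\delta\}$: bounding $|2\eta_Q-1|$ by a constant and the tail probability by $\exp(-\delta^2/V(x)^2)$ discards exactly the mechanism that yields the exponent $1+\alpha$ with a bounded constant. With $\delta=Ka_{m,n}$ and $V(x)=c_1a_{m,n}e^{c_2\|x\|_2}$, the quantity $E_{\bbQ_X}\big[\exp(-\delta^2/V(x)^2)\big]=E_{\bbQ_X}\big[\exp(-K^2e^{-2c_2\|x\|_2}/c_1^2)\big]$ does not depend on $a_{m,n}$ at all, so your balancing is forced: to push the bounded-$\|x\|$ piece below $a_{m,n}^{1+\alpha}$ you need $(K/c_1)^2e^{-2c_2R}\ge (1+\alpha)\log(1/a_{m,n})$, and to push the tail piece below $a_{m,n}^{1+\alpha}$ you need $R$ at least of order $\sqrt{\log(1/a_{m,n})}$; together these force $K\ge c\,e^{c_2R}\sqrt{\log(1/a_{m,n})}\ge e^{c'\sqrt{\log(1/a_{m,n})}}$, and then the near-region contribution $C_\alpha(Ka_{m,n})^{1+\alpha}$ exceeds $C a_{m,n}^{1+\alpha}$ by the diverging factor $K^{1+\alpha}$. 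So no choice of $(K,R)$ in your scheme gives the stated conclusion; the "delicate point" you flagged is not a technicality but the crux, and the single-threshold split with the crude far-region bound cannot close it.

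The paper's proof avoids this by dyadic peeling over margin levels: it decomposes into shells $A_j=\{2^{j-1}\delta<|\eta_Q(x)-1/2|<2^j\delta\}$ with $\delta=a_{m,n}$, applies the margin condition on \emph{every} shell (so each term already carries the factor $(2^j\delta)^{\alpha}\cdot 2^j\delta \asymp 2^{j(1+\alpha)}\delta^{1+\alpha}$), and uses the concentration at the shell's own level $2^{j-1}\delta$ rather than at $\delta$, with a truncation radius $\chi_j$ of order $\sqrt{\log(1/\delta)+j}$ that depends on the shell. The Gaussian factor then decays geometrically in $j$ while every term is proportional to $\delta^{1+\alpha}$, which is what lets the sum close at order $a_{m,n}^{1+\alpha}$. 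Your lossy monotonicity step is precisely what the peeling replaces: for $|\eta_Q-1/2|$ of order one the true tail is $\exp\big(-c/(a_{m,n}^2e^{2c_2\|x\|_2})\big)$, vastly smaller than your surrogate $\exp(-\delta^2/V(x)^2)$. To repair your argument you would need to reinstate the margin condition and the varying height $|2\eta_Q-1|$ beyond the threshold through such a shell decomposition, with the $\|x\|_2$-truncation chosen per shell --- at which point you have essentially reconstructed the paper's proof.
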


\begin{proof}[Proof of lemma \ref{lemma:ub-excess-risk}]

This proof of the lemma \ref{lemma:ub-excess-risk} is a modification on the proof of \cite[Lemma 3.1]{audibert2007fast}. For some $\delta > 0$ we define the following sets. 
\[
\begin{aligned}
A_0 & \triangleq \left\{x: 0<\big|\eta_Q(x) - \frac12\big|<\delta \right\},\\
A_j & \triangleq \left\{x: 2^{j-1}\delta<\big|\eta_Q(x) - \frac12\big|<2^j \delta \right\}, j \ge 1\, .
\end{aligned}
\]
Defining $f_Q$ as the Bayes classifier for target we get
\[
\begin{aligned}
E \big[\cE_Q(\hat f)\big] & = E\Big[E_{\bbQ_X}\big[|2\eta_Q(x)-1| \bbI\{\hat f \neq f_Q\} \big]\Big]\\
& = \sum_{j \ge 0} E\Big[E_{\bbQ_X}\big[|2\eta_Q(x)-1| \bbI\{\hat f \neq f_Q\} \bbI \{x \in A_j \} \big]\Big]\\
& \le 2\delta \bbQ_X \Big(0 < \big|\eta_Q(x) - \frac1 2\big|<\delta \Big)\\
& ~~~~ + \sum_{j \ge 1} E\Big[E_{\bbQ_X}\big[|2\eta_Q(x)-1| \bbI\{\hat f \neq f_Q\} \bbI \{x \in A_j \} \big]\Big]\\
& \le 2c_\alpha\delta^{1+\alpha}  + \sum_{j \ge 1} E\Big[E_{\bbQ_X}\big[|2\eta_Q(x)-1| \bbI\{\hat f \neq f_Q\} \bbI \{x \in A_j \} \big]\Big]\,. 
\end{aligned}
\] On the event $\{\hat f \neq f_Q\}$ we see that the inequality $|\eta_Q(x)-1/2| \le |\hat \eta_Q(x) - \eta_Q(x)|$ is satisfied. So for any $j \ge 1$, we get 
\[
\begin{aligned}
& E\Big[E_{\bbQ_X}\big[|2\eta_Q(x)-1| \bbI\{\hat f \neq f_Q\} \bbI \{x \in A_j \} \big]\Big]\\
&   ~~~ \le 2^{j + 1}\delta E\Big[E_{\bbQ_X}\big[ \bbI\{\hat f \neq f_Q\} \bbI \{x \in A_j \} \big]\Big]\\
& ~~~ \le 2^{j + 1}\delta E_{\bbQ_X} \Big[ P\big( |\hat \eta_Q (x) - \eta_Q(x)| > 2^{j - 1}\delta \big) \bbI \big\{ 0 < |\eta_Q(x) - 1/2| < 2^j \delta \big\} \Big] \\
& ~~~ \le2^{j + 1}\delta E_{\bbQ_X} \left[ e^{- \frac{2^{2j - 2}\delta^2}{c_1^2 a_{m, n}^2 e^{2c_2\|x\|_2}}} \bbI \big\{ 0 < |\eta_Q(x) - 1/2| < 2^j \delta \big\} \right]\\
& ~~~ = 2^{j + 1}\delta E_{\bbQ_X}\left[ e^{- \frac{2^{2j - 2}\delta^2}{c_1^2 a_{m, n}^2 e^{2c_2\|x\|_2}}} \bbI \big\{ 0 < |\eta_Q(x) - 1/2| < 2^j \delta \big\} \bbI \big\{\|x\|_2  \le \chi  \big\} \right]\\
& ~~~~~~ + 2^{j + 1}\delta E_{\bbQ_X}\left[ e^{- \frac{2^{2j - 2}\delta^2}{c_1^2 a_{m, n}^2 e^{2c_2\|x\|_2}}} \bbI \big\{ 0 < |\eta_Q(x) - 1/2| < 2^j \delta \big\} \bbI \big\{\|x\|_2 > \chi  \big\} \right]\\
& ~~~\le 2^{j + 1}\delta e^{- \frac{2^{2j - 2}\delta^2}{c_1^2 a_{m, n}^2 e^{2c_2\chi }}} \bbQ_X \big(0 < |\eta_Q(x) - 1/2| < 2^j \delta\big) + 2^{j + 1}\delta \bbQ_X\big(\|x\|_2 > \chi\big)\\
& ~~~ \le 2\cdot  2^{j(1+\alpha)}\delta^{1 + \alpha} e^{- \frac{2^{2j - 2}\delta^2}{c_1^2 a_{m, n}^2 e^{2c_2\chi }}}  + 2^{j + 1}\delta e^{-c_3\chi^2}\,. 
\end{aligned}
\]
Here, we note that $|\eta_Q(x) - 1/2|\le 1/2$ which implies for any $j$ with $2^{j}\delta \ge 1$ we have \[
 E\Big[E_{\bbQ_X}\big[|2\eta_Q(x)-1| \bbI\{\hat f \neq f_Q\} \bbI \{x \in A_j \} \big]\Big] = 0\,.
\] So, for $j \ge 1$ we have  \[
E\Big[E_{\bbQ_X}\big[|2\eta_Q(x)-1| \bbI\{\hat f \neq f_Q\} \bbI \{x \in A_j \} \big]\Big] \le 2\cdot  2^{j(1+\alpha)}\delta^{1 + \alpha} e^{- \frac{2^{2j - 2}\delta^2}{c_1^2 a_{m, n}^2 e^{2c_2\chi }}}  + 2  e^{-c_3\chi^2}\,. 
\]
We let $\delta = a_{m, n}$. Then we have
\[
\begin{aligned}
 E\Big[E_{\bbQ_X}\big[|2\eta_Q(x)-1| \bbI\{\hat f \neq f_Q\} \bbI \{x \in A_j \} \big]\Big] \le 2\cdot  2^{j(1+\alpha)}\delta^{1 + \alpha} e^{- \frac{2^{2j - 2}}{c_1^2  e^{2c_2\chi }}}  + 2 e^{-c_3\chi^2}\,.
\end{aligned}
\] We fix $\chi = \sqrt{\frac{(1+\alpha)\log(1/\delta) + j \frac{c_3(\log(2))^2}{4c_2^2}}{c_3}}$. For such $\chi$ we have $e^{-c_3\chi^2} = \delta^{1+\alpha}e^{-j \frac{c_3(\log(2))^2}{4c_2^2}} = \delta^{1+\alpha} e^{-c_4j }$ and \[
 \chi \le \frac{\sqrt{(1+\alpha)\log(1/\delta)} + j \sqrt{c_3} \frac{\log(2)}{2c_2}}{\sqrt{c_3}}
\] 
or \[
e^{2c_2\chi } \le e^{\sqrt{\frac{(1+\alpha)\log(1/\delta)}{c_3}}} 2^j  = c_5 2^j 
\] 
for some constant $c_4, c_5>0$. 
Hence for an appropriate $c_6, c_7>0$ we have 
\[
\begin{aligned}
 & \sum_{j \ge 1} E\Big[E_{\bbQ_X}\big[|2\eta_Q(x)-1| \bbI\{\hat f \neq f_Q\} \bbI \{x \in A_j \} \big]\Big] \\
& ~~~ \le \sum_{j \ge 1} \Big[2\cdot  2^{j(1+\alpha)}\delta^{1 + \alpha} e^{-c_6 2^j  }  + 2 \delta^{1+\alpha} e^{-c_4j }\Big]\\
& ~~~ \le c_7 \delta^{1+\alpha}\,. 
\end{aligned}
\] 
This concludes the lemma. 
\end{proof}

\subsection{Lower bound}

\label{sec:proof-lb}

\begin{proof}[Proof of Theorem \ref{thm:lower-bound}]
We break the proof in two part. The first part is related to the non-parametric estimation of $\eta_P$. We refer to the proof of \cite[Theorem 3.5]{audibert2007fast} for the proof  of this part. To directly relate to the proof we let $\bbP = \bbQ$. In this case the source and target satisfies transfer with $\beta = 0$. This leads to a bound \[
\inf_{f}\sup_{(\bbP, \bbQ)}E\big[\cE_{\bbQ_X}(f)\big] \ge \inf_{f}\sup_{\substack{(\bbP, \bbQ),\\ \bbP = \bbQ}}E\big[\cE_{\bbQ_X}(f)\big]  \ge c' (m+n)^{-\frac{\beta(1+\alpha)}{2\beta+d_F}}
\]
 for some constant $c'>0$ which is independent of $\alpha, \beta, m$ and $m$. Noticing $m \ge n$ we have the bound \[
\inf_{f}\sup_{(\bbP, \bbQ)}E\big[\cE_{\bbQ_X}(f)\big]  \ge c m^{-\frac{\beta(1+\alpha)}{2\beta+d_F}}
\] for some constant $c>0$.

\paragraph{Parametric part}
For simplicity we  call $d_T $ by $d$ and let $T(x) = x$. The proof easily adapts to other $T(x)$. 
Here we shall create two probability distributions $H_1$ and $H_{-1}$ and apply LeCam's lower bound. For each of these probability distributions we fix the followings. 
\[
\eta_P(x) = \frac12, \ p_X(x) = 1,\ x \in [0, 1]^d. 
\]
Note that the assumptions about regular support and  strong density of $\bbP_X$ (Assumption \ref{assump:strong-density}) and smoothness on $\eta_P$ (Assumption \ref{assump:smoothness}) are satisfied in the above specifications.

Define $d_0 = \lfloor d/4\rfloor$ and $r = \sqrt{d/n}$. 
Let $u \in \bD \triangleq \{-1, 0, 1\}^d$. We define a probability measure on $\bD$ in the following way. With $A(u) = \sum_{j = 1}^{2d_0} u_j$ and $B(u) = \sum_{j = 1 + 2d_0}^{4d_0} u_j$
we define $P(A(u) = a, B(u) = b)$ in  Table \ref{tab:prob}.
Within each these events we assign equal probability mass  to  each of the points. This induced a probability on the set $ \tilde D = \{1/2 + u/4: u \in D\}$ where the scalar addition and multiplication to a vector is interpreted in the usual sense. We call this probability $\bbQ_X$. Note that $\bbQ_X$ is a discrete probability with support points in $[0, 1]^d$. This probability satisfies the Assumption \ref{assump:dominated-measure} that support of $\bbP_X$ contains the support of $\bbQ_X$. \begin{table}
     \centering
    \begin{tabular}{ccccc}
    \toprule
         & A(u) & -2 & 0 & 2  \\
        B(u) & &&&\\
        \midrule
        -2 & & $\frac{c_\alpha r^\alpha}{2}$ & 0 & 0\\
        0 & &  0 & $1 - c_\alpha r^\alpha$ & 0\\
        2 && 0 & 0 & $\frac{c_\alpha r^\alpha}{2}$\\
        \bottomrule
        
    \end{tabular}
 \caption{Probabilites }
   \label{tab:prob}
\end{table}
For $\nu \in \{-1, 1\}$ we define 
\[
\beta_\nu = r \big( 1_{2d_0}^\top , \nu 1_{2d_0}^\top, 0_{d - 4d_0} \big)\in \reals^d, ~~ \eta_Q^{(\nu)}(x) = \sigma \left(\beta_\nu^\top (x-1/2)\right)\,.
\] 
Note that $\beta_\nu^\top (x-1/2) = \frac{r} 4\big(A(u) + \nu B(u)\big)$. 
In lemma \ref{lemma:margin-lb-par2} we show that $\bbQ$ satisfies margin condition. We define the probabilities $H_\nu = \bbP^{\otimes m} \otimes \bbQ_\nu ^{\otimes n}$. Next we call the Bayes classifier for $\bbQ_\nu$ as $f_\nu$ and calculate the excess risk of $f_{-1}$ under the probability $\bbQ_1. $ Note that 
\[
\begin{aligned}
\cE_{\bbQ_1}(f_{-1}) &=  E_{\bbQ_{X, 1}}\Big[\big|\eta_Q^{(1)}(x) - 1/2  \big|\bbI \big\{f_1(x) \neq f_{-1}(x)\big\}\Big]
\end{aligned}
\]
Here, denoting $\beta_\nu^\top (x-1/2)$ by $v$ we see that 
\[
\begin{aligned}
\Big|\eta_Q^{(\nu)}(x) - \frac12 \Big| & = \frac{\big|e^{v} -1\big|}{2(e^{v} + 1)}
 \ge \frac{\big|e^{v} -1\big|}{2} \ge \frac{v}{4}
\end{aligned}
\]
as long as $|v| \le r \le  \log (2)$. This implies
\[
\begin{aligned}
\cE_{\bbQ_1}(f_{-1}) &\ge  \frac{r} 4 E_{\bbQ_{X, 1}}\Big[\big|A(u) + B(u)   \big|\bbI \big\{A(u) + B(u)  \neq A(u) - B(u) \big\}\Big]\\
& = \frac{r} 4 E_{\bbQ_{X, 1}}\Big[\big|A(u) + B(u)   \big|\bbI \big\{ B(u)  \neq 0 \big\}\Big]\\
& = \frac{r} 4 \left[2 \times 4 \frac{c_\alpha r^\alpha}{2}  \right]\\
& \ge c' r^{1+\alpha} , 
\end{aligned}
\] for some $c'>0$.

To calculate the Kulback-Leibler divergence between $H_{1}$ and $H_{-1}$ we notice that 
\[
\begin{aligned}
\text{KL} (H_{1}|| H_{-1}) &= n \text{KL}(\bbQ_{1} || \bbQ_{-1})\\
& = n E_{\bbQ_{X, 1}} \left[\eta_{Q}^{(1)}(x) \log\Bigg(\frac{\eta_{Q}^{(1)}(x)}{\eta_{Q}^{(-1)}(x)}\Bigg) +  \big(1-\eta_{Q}^{(1)}(x)\big) \log\Bigg(\frac{1-\eta_{Q}^{(1)}(x)}{1-\eta_{Q}^{(-1)}(x)}\Bigg) \right]\\
\end{aligned}
\]

Letting $A(x) = \sum_{j = 1}^{2d_0} (x_j - 1/2) = A(u)/4$ and $B(x) = \sum_{j = 1 + 2d_0}^{4d_0} (x_j - 1/2) = B(u)/4$ we see that 
\[
\begin{aligned}
& \eta_{Q}^{(1)}(x) \log\Bigg(\frac{\eta_{Q}^{(1)}(x)}{\eta_{Q}^{(-1)}(x)}\Bigg) +  \big(1-\eta_{Q}^{(1)}(x)\big) \log\Bigg(\frac{1-\eta_{Q}^{(1)}(x)}{1-\eta_{Q}^{(-1)}(x)}\Bigg)\\
& ~~= \sigma\big(A(x) + B(x)\big) \log\Bigg(\frac{\sigma\big(A(x) + B(x)\big)}{\sigma\big(A(x) - B(x)\big)}\Bigg) \\
& ~~~~~~+ (1-\sigma)\big(A(x) + B(x)\big) \log\Bigg(\frac{(1-\sigma)\big(A(x) + B(x)\big)}{(1-\sigma)\big(A(x) - B(x)\big)}\Bigg)\\
& ~~ = (\bA)
\end{aligned}
\] Since $A(x), B(x) \to 0$ on the support, we see that the above quantity is approximately 
\[
\begin{aligned}
(\bA) & \sim \big(A(x) + B(x)\big) \big(A(x) + B(x) - A(x) + B(x)\big) \\
& ~~~ + \big(-A(x) - B(x)\big) \big(-A(x) - B(x) + A(x) - B(x)\big)\\
& = 4B(x) \big(A(x) + B(x)\big) = \frac{r^2}{4} B(u) \big(A(u) + B(u)\big)
\end{aligned}
\]
Using the asymptotic behavior on $(\bA)$ we have \[
\begin{aligned}
\text{KL} (H_{1}|| H_{-1}) & \sim \frac{nr^2}{4} E_u \Big[B(u) \big(A(u) + B(u)\big)\Big]\\
& = \frac{nr^2}{4} \frac{c_\alpha r^\alpha}{2} = c_1 d r^{\alpha} \le 1
\end{aligned}
\] for sufficiently large $n$ 
since $r\to 0$ in fixed dimensional setting.

From Equation (2.5) and Theorem 2.2 (iii) in \cite{tsybakov2009Introduction} we have the formulation that 
\[
\min_f \max_{(\bbP, \bbQ)} E\big[\cE_{\bbQ_X}(f)\big] \ge \frac{\cE_{\bbQ_1}(f_{-1})}{8} e^{-\text{KL} (H_{1}|| H_{-1})}
\] where letting $E\big[\cE_{\bbQ_X}(f)\big] \ge c' r^{1+\alpha}$ and $\text{KL} (H_{1}|| H_{-1}) \le 1$ we 
conclude that there exists a constant $c>0 $ such that \[
\min_f \max_{(\bbP, \bbQ)} E\big[\cE_{\bbQ_X}(f)\big] \ge c\Big(\frac dn\Big)^{\frac{1+\alpha}{2}}.
\]

Finally we combine the lower bounds for non-parametric and parametric parts to obtain 
\[
\min_f \max_{(\bbP, \bbQ)} E\big[\cE_{\bbQ_X}(f)\big] \ge c \Big(\frac {d_T} n\Big)^{\frac{1+\alpha}{2}} \vee m^{-\frac{\beta(1+\alpha)}{2\beta+d_F}}
\] for some constant $c>0$ or \[
\min_f \max_{(\bbP, \bbQ)} E\big[\cE_{\bbQ_X}(f)\big] \ge c' \left(\Big(\frac {d_T} n\Big)^{\frac{1+\alpha}{2}} + m^{-\frac{\beta(1+\alpha)}{2\beta+d_F}}\right)^{1+\alpha}
\] for some constant $c' > 0$. 

\end{proof}

\begin{lemma}
\label{lemma:margin-lb-par2}
The regression functions $\eta_Q^{(\nu)}$ satisfies $\alpha$ margin condition with respect to $\bbQ_X$. 
\end{lemma}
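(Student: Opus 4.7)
The plan is to verify the inequality $\bbQ_X\bigl(0 < |\eta_Q^{(\nu)}(x) - 1/2| < t\bigr) \le C_\alpha t^\alpha$ by enumerating the few atoms on which $\bbQ_X$ is supported. Since $\beta_\nu^\top(x - 1/2) = \frac{r}{4}\bigl(A(u) + \nu B(u)\bigr)$ depends on $x$ only through the pair $(A(u), B(u))$, and that pair lives in $\{(-2,-2),(0,0),(2,2)\}$ by Table \ref{tab:prob}, the random variable $\eta_Q^{(\nu)}(x) - 1/2$ takes at most three values and the problem reduces to a short case analysis.

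For $\nu = -1$ the check is immediate: $A(u) - B(u) = 0$ on every support atom, so $\eta_Q^{(-1)}(x) \equiv \sigma(0) = 1/2$, and the event of interest is empty for every $t > 0$.

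For $\nu = 1$ I would tabulate the three atoms. The atom $(0,0)$, of mass $1 - c_\alpha r^\alpha$, yields $\eta_Q^{(1)}(x) = 1/2$ and contributes only to $\{|\eta_Q^{(1)}(x) - 1/2| = 0\}$. The two atoms $(\pm 2, \pm 2)$, of combined mass $c_\alpha r^\alpha$, both give the common value $\eta^\star := \sigma(r) - 1/2$. A one-line Taylor estimate shows $\eta^\star \ge c_0 r$ for an absolute constant $c_0 > 0$, whenever $r$ lies in some fixed bounded interval (which is the operative regime since $r = \sqrt{d_T/n} \to 0$). Consequently the distribution of $|\eta_Q^{(1)}(x) - 1/2|$ has a single positive jump of height $c_\alpha r^\alpha$ located at $\eta^\star$: for $t \le \eta^\star$ the set in the margin inequality is empty, while for $t > \eta^\star \ge c_0 r$ one has $c_\alpha r^\alpha \le c_\alpha c_0^{-\alpha} t^\alpha$, so $C_\alpha := c_\alpha c_0^{-\alpha}$ works uniformly in $t$.

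The only mild subtlety is ensuring that $c_0$ does not drift with the parameters driving $r \to 0$. This is handled by observing that $\sigma'$ is continuous and bounded away from zero on any bounded neighbourhood of the origin, so $c_0$ can be chosen as a uniform lower bound of $\sigma'$ on $[-r_{\max}, r_{\max}]$ for a fixed $r_{\max}$ that $r$ eventually enters; all other constants in the argument depend only on $c_\alpha$ and $\alpha$.
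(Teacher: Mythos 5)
Your proposal is correct and follows essentially the same route as the paper: identify that the only atoms with $\eta_Q^{(\nu)}(x)\neq \tfrac12$ carry total mass $c_\alpha r^\alpha$ and sit at distance of order $r$ from $\tfrac12$, so the margin event is empty for $t$ below that gap and is bounded by $c_\alpha r^\alpha \le C_\alpha t^\alpha$ above it. Your explicit enumeration of the three atoms (including the trivial $\nu=-1$ case) and the mean-value bound $\sigma(r)-\tfrac12 \ge c_0 r$ are just a cleaner rendering of the paper's logit-inequality manipulations, with the same choice of an appropriately tuned $c_\alpha$ to match the class constant.
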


\begin{proof}[Proof of lemma \ref{lemma:margin-lb-par2}]
Note that \[
\bbQ_X\left(0 < \Big|\eta_Q^{(\nu)}(x) - \frac12 \Big|\le  t \right) \le c_\alpha r^\alpha. 
\] 
Note that \[
t \ge \Big|\eta_Q^{(\nu)}(x) - \frac12 \Big| = \frac{\big|e^u -1\big|}{2(e^u + 1)} \ge \frac{\big|e^u -1\big|}{2}
\] which follows \[
-r < \log(1- 2t) \le x \le \log(1+ 2t) < r
\] if $2t \le \big(e^r - 1\big) \wedge \big( 1- e^{-r}\big) \le r$ as long as $|r|\le \log(2)$. 

Hence, 
\[
\begin{aligned}
\bbQ_X\left(0 < \Big|\eta_Q^{(\nu)}(x) - \frac12 \Big|\le  t \right) &\le \bbQ_X\left(0 < \left|\sum_{j=1}^{2d_0} u_j +\nu \sum_{j=2d_0 + 1}^{4d_0} u_j \right|\le  c_2t \right)\\
& = \begin{cases}
= 0 & t < 2r,\\
\le c_\alpha r^\alpha & t \ge 2r.
\end{cases}
\end{aligned}
\] For an appropriate choice of $c_\alpha$ this implies 

\[
\bbQ_X\left(0 < \Big|\eta_Q^{(\nu)}(x) - \frac12 \Big|\le  t \right) \le C_\alpha t^\alpha\,. 
\]

\end{proof}

\section{Experimental supplements}

Codes for the experiments can be found in the \href{https://github.com/smaityumich/linearly-shifted-transfer}{github repository}. 

\subsection{Synthetic experiments}
\label{sec:appx-simulation}

\begin{figure}
    \centering
    \includegraphics[scale = 0.35]{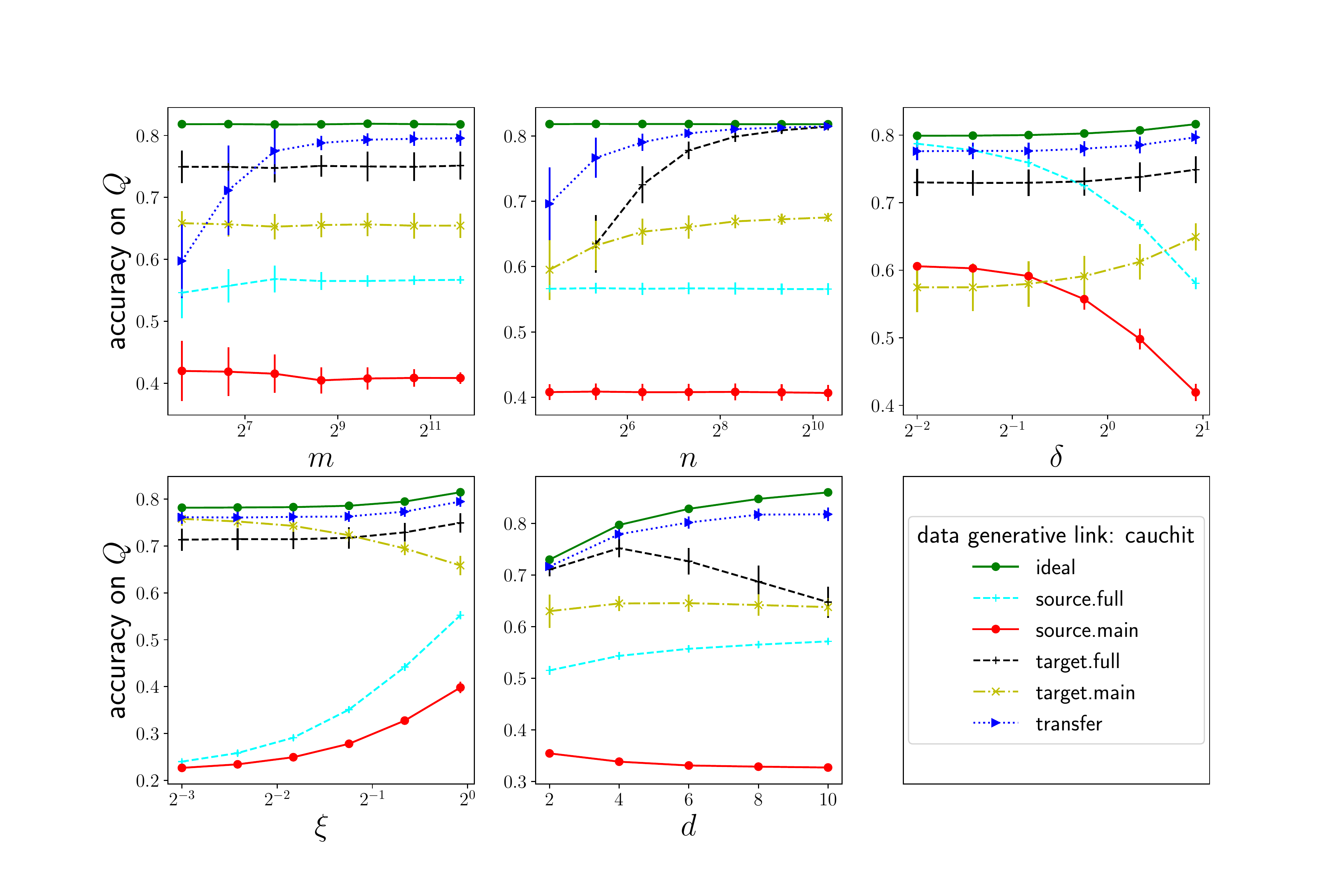}
    \caption{Predictive accuracy on target $Q$ for different source sample size ($m$, \emph{upper left}), target sample size ($n$, \emph{upper middle}), linear separation between $P$ and $Q$ ($\delta$, \emph{upper right}), non-linearity strength ($\xi$, \emph{lower left}) and covariate dimension ($d$, \emph{lower middle}). The \emph{cauchit} link is used in data generative process. The default values for this parameters are set at $m = 2000, n = 100, \delta = 2, \xi = 1$ and $d = 5$. }
    \label{fig:sim-logit}
\end{figure}

\begin{figure}
    \centering
    \includegraphics[scale = 0.35]{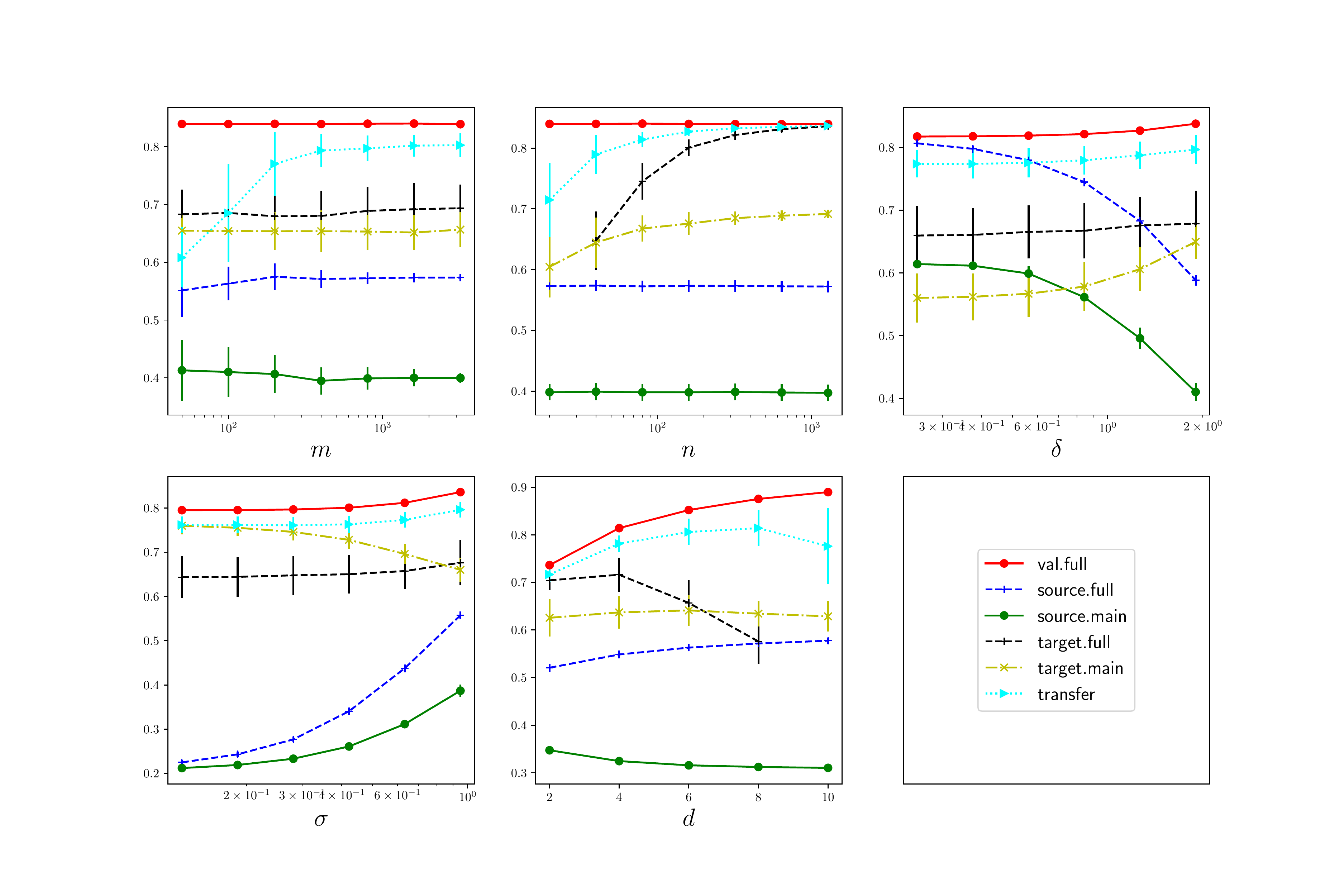}
    \caption{Predictive accuracy on target $Q$ for different source sample size ($m$, \emph{upper left}), target sample size ($n$, \emph{upper middle}), linear separation between $P$ and $Q$ ($\delta$, \emph{upper right}), non-linearity strength ($\xi$, \emph{lower left}) and covariate dimension ($d$, \emph{lower middle}). The \emph{cloglog} link is used in data generative process. The default values for this parameters are set at $m = 2000, n = 100, \delta = 2, \xi = 1$ and $d = 5$. }
    \label{fig:sim-logit}
\end{figure}

\newpage

\subsection{Mortality prediction with UK Biobank data}
\label{sec:appx-ukbb}

We present a list of predictors in the Table \ref{tab:ukbb-covariates}. 

    \begin{longtable}{@{\stepcounter{rowcount}\therowcount \hspace*{\tabcolsep}}  p{.60\textwidth}  p{.08\textwidth} p{.25\textwidth}} 
\toprule
Description & UDI & Type\\
\hline
 Sex; Uses data-coding 9 & 31 & Categorical (single)\\
 Waist circumference & 48 & Continuous\\
 Townsend deprivation index at recruitment & 189 & Continuous\\
 Number of days/week of moderate physical activity 10+ minutes; Uses data-coding 100291 & 884 & Integer\\
Number of days/week of vigorous physical activity 10+ minutes; Uses data-coding 100291 & 904 & Integer\\
Cheese intake; Uses data-coding 100377 & 1408 & Categorical (single)\\
 Milk type used; Uses data-coding 100387 & 1418 & Categorical (single)\\
 Cereal intake; Uses data-coding 100373 & 1458 & Integer\\
 Salt added to food; Uses data-coding 100394 & 1478 & Categorical (single)\\
Alcohol intake frequency.; Uses data-coding 100402 & 1558 & Categorical (single)\\
Cancer diagnosed by doctor; Uses data-coding 100603 & 2453 & Categorical (single)\\
Forced expiratory volume in 1-second (FEV1) & 3063 & Continuous\\
Diastolic blood pressure, automated reading & 4079 & Integer\\
Systolic blood pressure, automated reading & 4080 & Integer\\
Doctor restricts physical activity due to heart condition; Uses data-coding 100267 & 6014 & Categorical (single)\\
Qualifications; Uses data-coding 100305 & 6138 & Categorical (multiple)\\
Medication for cholesterol, blood pressure, diabetes, or take exogenous hormones; Uses data-coding 100626 & 6153 & Categorical (multiple)\\
Vitamin and mineral supplements; Uses data-coding 100629 & 6155 & Categorical (multiple)\\
Non-cancer illness code, self-reported; Uses data-coding 6 & 20002 & Categorical (multiple)\\
Treatment/medication code; Uses data-coding 4 & 20003 & Categorical (multiple)\\
Illnesses of father; Uses data-coding 1010 & 20107 & Categorical (single)\\
Illnesses of mother; Uses data-coding 1010 & 20110 & Categorical (single)\\
Illnesses of siblings; Uses data-coding 1010 & 20111 & Categorical (single)\\
Ethnic background; Uses data-coding 1001 & 21000 & Categorical (single)\\
Diagnosed with coeliac disease or gluten sensitivity; Uses data-coding 502 & 21068 & Categorical (single)\\
Weight & 23098 & Continuous\\
Body fat percentage & 23099 & Continuous\\
Body mass index (BMI) & 23104 & Continuous\\
Particulate matter air pollution (pm10); 2007 & 24019 & Continuous\\
Diagnoses - main ICD10; Uses data-coding 19 & 41202 & Categorical (multiple)\\
Beef intake; Uses data-coding 100016 & 103020 & Categorical (single)\\
Pork intake; Uses data-coding 100016 & 103030 & Categorical (single)\\
Fish consumer; Uses data-coding 100010 & 103140 & Categorical (single)\\
Vegetable consumers; Uses data-coding 100010 & 103990 & Categorical (single)\\
\bottomrule
    \caption{Descriptions, UDI codes and data types on the predictors in UK Biobank data}
    \label{tab:ukbb-covariates}
\end{longtable}

\subsubsection{Mortality prediction with logistic regression}

In addition to the models in Section \ref{sec:ukbb} we also studied the performance of three other models only with main effects. A description of them follows. 
\begin{itemize}

    \item \emph{source.main} fits only the main effects on White subpopulation: 
    \begin{equation}
         \begin{aligned}
    \hat \eta(x) \triangleq \sigmoid \left(\sum_{j}\hat \alpha_j x_j\right),\\
    \big(\{\hat \alpha_j\}_j\big) = \argmin \frac1m \sum_{i = 1}^m w_i^{(P)}\ell \left(y^{(P)}_i, \sum_{j} \alpha_j x_j^{(P)} \right) + \lambda \left(\sum_{j} |\alpha_j|\right)\,.
    \end{aligned}
     \label{eq:white.main-ukbb}
    \end{equation}

    \item And a \emph{target.main} fits only the main effects on Asian subpopulation: 
    \begin{equation}
      \begin{aligned}
    \hat \eta(x) \triangleq \sigmoid \left(\sum_{j}\hat \alpha_j x_j\right),\\
    \big(\{\hat \alpha_j\}_j\big) = \argmin \frac1n \sum_{i = 1}^n w_i^{(Q)} \ell \left(y^{(Q)}_i, \sum_{j} \alpha_j x_j^{(Q)} \right) + \lambda \left(\sum_{j} |\alpha_j|\right)\,.
    \end{aligned}  
    \label{eq:asian.main-ukbb}
    \end{equation}
    
    \item We can transfer the estimate of only the \emph{main effects} from the model fitted on White data and refit the linear part from the Asian data under a lasso penalty. Due to  penalization this is not the same as just estimating main effects from target data with lasso penalization, as done in target.main. We denote this model as \emph{transfer.main}. 
    \begin{equation}
        \begin{aligned}
    \hat \eta (x) \triangleq \sigmoid \left(\sum_{j}\hat \alpha_j^{(Q)} x_j \right),\\
    \big(\{\hat \alpha_j^{(P)}\}_j\big) = \argmin \frac1m \sum_{i = 1}^m w_i^{(P)}\ell \left(y^{(P)}_i, \sum_{j} \alpha_j x_j^{(P)} \right)
    + \lambda \left(\sum_{j} |\alpha_j| \right),\\
  \big(\{\hat \delta_j\}_j\big) = \argmin \frac1n \sum_{i = 1}^n w_i^{(Q)}\ell \left(y^{(Q)}_i, \sum_{j} (\hat \alpha_j^{(P)} + \delta_j) x_j^{(P)} \right)
  + \lambda \left(\sum_{j} |\delta_j| \right),\\
  \hat \alpha_j^{(Q)} = \hat \alpha_j^{(P)} + \hat \delta_j
    \,.
    \end{aligned}
    \label{eq:offset-main}
    \end{equation}

\end{itemize}

With these there extra models we present the full results in Table \ref{tab:ukbb-full}.

\begin{table}[]
    \centering
   \begin{tabular}{lllrrrrr}
\toprule
{} &   train-data & test-data &  accuracy &    TPR &    TNR &    auc &  bal-acc \\
\midrule
source.full &        White &     White &     0.746 &  0.745 &  0.746 &  0.814 &    0.746 \\
            &        White &     Asian &     0.773 &  0.680 &  0.775 &  0.769 &    0.728 \\
\hline
source.main &        White &     White &     0.746 &  0.745 &  0.746 &  0.813 &    0.746 \\
            &        White &     Asian &     0.772 &  0.680 &  0.775 &  0.765 &    0.727 \\
\hline
target.full &        Asian &     Asian &     0.759 &  0.640 &  0.762 &  0.751 &    0.701 \\
target.main &        Asian &     Asian &     0.740 &  0.620 &  0.743 &  0.744 &    0.681 \\
\hline
transfer.main        &  White/Asian &     Asian &     0.753 &  0.700 &  0.755 &  0.765 &    0.727 \\
transfer &  White/Asian &     Asian &     0.750 &  0.740 &  0.750 &  0.773 &    0.745 \\
\bottomrule
\end{tabular}
\caption{Accuracies of 10 year mortality prediction in White and Asian subpopulations under different models.  }
    \label{tab:ukbb-full}
\end{table}

\end{document}